\newtheorem{defn}{Definition}
\newtheorem{thm}{Theorem}
\newtheorem{cor}{Corollary}
\newtheorem{conj}{Conjecture}
\newtheorem{prop}{Proposition}
\newtheorem{lem}{Lemma}
\DeclareMathOperator{\dreg}{d_{reg}}
\DeclareMathOperator{\dwit}{d_{wit}}
\DeclareMathOperator{\LM}{LM}
\begin{document}

\title{On the Complexity of Solving Quadratic Boolean Systems}
\author[1]{Magali Bardet}
\author[2,3,4]{Jean-Charles Faug\`ere}
\author[5]{Bruno Salvy}
\author[2,3,4]{Pierre-Jean Spaenlehauer}
\affil[1]{\'Equipe Combinatoire et Algorithmes -- Universit\'e de Rouen/LITIS}
\affil[2]{INRIA, Paris-Rocquencourt Center, SALSA Project}
\affil[3]{CNRS, UMR 7606, LIP6}
\affil[4]{UPMC, Univ Paris 06, LIP6
       {\small UFR Ing\'enierie 919, LIP6,
       Case 169, 4, Place Jussieu, F-75252 Paris}}
\affil[5]{INRIA, Paris-Rocquencourt Center, Algorithms Project}
\date{}
\maketitle

\begin{abstract}
A fundamental problem in computer science is to find all the common zeroes of $m$ quadratic polynomials in $n$ unknowns over $\mathbb{F}_2$. The cryptanalysis of several modern ciphers reduces to this problem. Up to now, the best complexity bound was reached by an exhaustive search in $4\log_2 n\,2^n$ operations. We give an algorithm that reduces the problem to a combination of exhaustive search and sparse linear algebra. This algorithm has several variants depending on the method used for the linear algebra step. 
Under precise algebraic assumptions on the input system, we show that the deterministic variant of our algorithm has complexity bounded by $O(2^{0.841n})$ when $m=n$, while a probabilistic variant of the Las Vegas type has expected complexity $O(2^{0.792n})$.
Experiments on random systems show that the algebraic assumptions are satisfied with probability very close to~1. 
We also give a rough estimate for the actual threshold between our method and exhaustive search, which is as low as~200, and thus very relevant for cryptographic applications.
\end{abstract}

\section{Introduction}
\textbf{Motivation and Problem Statement.}  Solving multivariate
quadratic  polynomial systems is a fundamental
problem in Information Theory. Moreover, \emph{random}
instances seem difficult to solve. Consequently, the security of
several multivariate cryptosystems relies on its hardness,
either directly (e.g., HFE \citep{Pat96}, UOV
\citep{KipPatGou99},\ldots) or indirectly (e.g., McEliece~\citep{EuroCrypt10}).
In some cases, systems of special types have to be solved, but recent proposals like the new Polly Cracker type
cryptosystem \citep{Asia2011} rely on the hardness of solving \emph{random}
systems of equations. This motivates the study of the complexity
of generic polynomial systems. A particularly important case for applications in cryptology is the Boolean case; in that case both the coefficients
and the solutions of the system are over \(\mathbb{F}_{2}\). The main problem to be solved is the following: 

\begin{quote}
{\bf The Boolean Multivariate Quadratic Polynomial Problem (Boolean MQ)}\\
{\bf Input:} \((f_{1},\ldots,f_{m})\in\mathbb{F}_2[x_{1},\ldots,x_{n}]^m\) with \(\deg(f_{i})=2\) for $i=1,\dots,m$.\\
{\bf Question:} Find -- if any -- \emph{all} \(z\in \mathbb{F}_{2}^{n}\) such that \(f_{1}(z)=\cdots=f_{m}(z)=0\).
\end{quote}
Another related problem stems from the fact that in many cryptographic applications, 
it is sufficient to find \emph{at least one} solution of the corresponding polynomial system (in that case a solution is the original clear message or is related to the secret key). For instance, the stream cipher QUAD~\citep{QUADEuro,QUADJSC} relies on the iteration of a set of multivariate quadratic polynomials over  \(\mathbb{F}_{2}\)
so that the security of the keystream generation is related to the difficulty  of finding at least one solution of the Boolean MQ problem. Thus, we also consider the following variant of the
Boolean MQ\ problem:
\begin{quote}
{\bf The Boolean Multivariate Quadratic Polynomial Satisfiability
Problem (Boolean MQ
SAT)}\\
{\bf Input:} \((f_{1},\ldots,f_{m})\in\mathbb{F}_2[x_{1},\ldots,x_{n}]^m\) with \(\deg(f_{i})=2\) for $i=1,\dots,m$.\\
{\bf Question:} Find -- if any -- \emph{one} \(z\in \mathbb{F}_{2}^{n}\) such that \(f_{1}(z)=\cdots=f_{m}(z)=0\).
\end{quote}
Testing for the existence of a solution is an NP-complete problem (it is plainly in NP and 3-SAT can be reduced to it~\citep{FraenkelYesha1979}). 
Clearly, the Boolean MQ\ problem is at least as hard as Boolean MQ SAT, while an exponential complexity is achieved by exhaustive search. 

Throughout this paper, \emph{random} means distributed  according to the uniform distribution (given $m$ and $n$, a random quadratic polynomial is uniformly distributed if all its coefficients are independently and uniformly distributed over $\mathbb{F}_2$).
The relation between the difficulties of  Boolean MQ and Boolean MQ SAT depends on the relative values of~$m$ and~$n$.
When \(m> n\), the number of solutions of the algebraic system is 0 or 1 with large
probability and thus finding one or all solutions is very similar, while when~$m=n$, the probability that a random system has at least one solution
over \(\mathbb{F}_{2}\)
tends to \(1-\frac{1}{e}\approx 0.63\) for large~$n$~\citep{FusBac07}. 
Hence if we have to find a
least one solution of a system with \(m<n\) equations in \(n\)
variables it is enough to specialize \(n-m\) variables randomly
in \(\mathbb{F}_{2}\); the resulting system has at least one
 solution with limit probability \(0.63\) and is easier to solve (since
the number of equations and variables is only \(m\)). Consequently,
in the remainder of this article we restrict ourselves to the
case \(m\geq n\).

To the best of our knowledge, in the \emph{worst case}, the best complexity bound to solve the Boolean MQ problem is obtained by a modified exhaustive search in $4\log_2(n)2^n$ operations~\citep{CHES10}.
Being able to decrease significantly this complexity is a long-standing
open problem and is the main goal of this article.  It is
crucial for practical applications to have sharp estimates of the
asymptotic complexity: it is especially important in the cryptographic
context where this value may have a strong impact on the sizes of the
keys needed to reach a given level of security.

\medskip

\noindent\textbf{Main results. }We describe a new algorithm \textsf{BooleanSolve} that solves Boolean MQ for determined or overdetermined systems ($m=\alpha n$ with $\alpha\ge1$). We show how to adapt it to solve the Boolean MQ SAT problem.
This algorithm has deterministic and Las Vegas variants, depending on the choice of some linear algebra subroutines. Our main result is:
\begin{thm} The Boolean MQ Problem is solved by Algorithm {\sf BooleanSolve}. If~$m=n$ and the system fulfills algebraic assumptions detailed in Theorem~\ref{thm:globalcompl}, then this algorithm uses a number of arithmetic operations in~${\mathbb F}_2$ that is:
\begin{itemize}
\item $O(2^{0.841n})$ using the deterministic variant;
\item of expectation $O( 2^{0.792 n})$ using the \emph{Las Vegas} probabilistic variant.
\end{itemize}
\end{thm}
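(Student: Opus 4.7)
The plan is to analyze Algorithm \textsf{BooleanSolve} as a hybrid of partial exhaustive search and algebraic consistency testing. The algorithm fixes $k$ of the $n$ variables by iterating over the $2^k$ possible assignments; for each assignment, the resulting quadratic system in $n-k$ variables is tested for solvability by linear algebra on a Macaulay-style matrix, and when solutions exist they are enumerated. Correctness is immediate: every common zero of $f_1,\ldots,f_m$ projects onto some assignment of the $k$ chosen variables, so exhausting these assignments and using an exact consistency test on each subsystem -- which is what the hypotheses of Theorem~\ref{thm:globalcompl} guarantee -- recovers the full variety.

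For the complexity, the total work decomposes as $2^k$ times the cost of the inner linear algebra on a subsystem in $n-k$ variables. Under the semi-regularity assumption on the input together with the field equations $x_i^2=x_i$, the witness degree $\dwit$ (respectively $\dreg$ for the full MQ variant) at which solvability can be certified grows linearly with $n-k$, say like $\delta(n-k)$, and the corresponding Macaulay matrix has order $\binom{n-k}{\delta(n-k)}$. Stirling's formula turns this binomial into $2^{H(\delta)(n-k)}$, with $H$ the binary entropy. The deterministic variant bounds the inner cost by dense linear algebra at cost $\binom{n-k}{\delta(n-k)}^{\omega}$, yielding a total exponent of the form $\beta + \omega H(\delta)(1-\beta)$ with $k=\beta n$; the Las Vegas variant replaces this by a Wiedemann-type sparse iteration whose cost is essentially the square of the matrix size times a density factor. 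In both cases the numerical constants $0.841$ and $0.792$ are obtained by minimizing the resulting expression over $\beta\in[0,1]$, using the explicit value of $\delta$ coming from the semi-regular Hilbert series over $\mathbb{F}_2$ when $m=n$.

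The main technical obstacle is to ensure that, for each of the $2^k$ specializations, the residual system still behaves like a semi-regular sequence so that the predicted value of the solving degree is actually attained. This is exactly the role played by the algebraic assumptions of Theorem~\ref{thm:globalcompl}: I would invoke them to reduce the per-call analysis to a single generic Hilbert-series computation and then propagate the bound uniformly across the $2^k$ outer iterations. A secondary subtlety is the probabilistic side of the Las Vegas variant: Wiedemann's bound holds only in expectation, so one has to check that after multiplication by $2^k$ the overall expected running time remains $O(2^{0.792 n})$, and that randomized retries handle any degenerate projections without polluting the asymptotics.
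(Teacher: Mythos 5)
Your overall architecture matches the paper's: specialize $k=(1-\gamma)n$ variables, test each of the $2^k$ residual systems for inconsistency via linear algebra on a Macaulay matrix in a degree governed by the semi-regular Hilbert series, bound that degree linearly in $n-k$ (the paper gets $d_0\sim \gamma D n$ with $D=M(\alpha/\gamma)$ by a saddle-point analysis of $[t^d](1+t)^{n-k}/((1-t)(1+t^2)^m)$), convert the binomial matrix size to an entropy exponent by Stirling, and optimize over $k/n$ with $\theta=2.376$ giving $0.841$ and the Wiedemann-type variant giving $0.792$. Two points in your write-up are off, one cosmetic and one substantive.

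The cosmetic one: correctness does not depend on the algebraic assumptions, and the Macaulay test is not ``exact.'' It is one-sided -- a solution of $\mathbf u\cdot\mathsf M=\mathbf r$ certifies inconsistency, but failure to find one certifies nothing -- and every unpruned branch falls back to exhaustive search, so the algorithm is unconditionally correct for any choice of $d_0$.

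The substantive gap is in your plan to handle the unpruned branches by arguing that ``each of the $2^k$ specializations still behaves like a semi-regular sequence'' and propagating the degree bound uniformly. This cannot work as stated: a branch containing an actual solution of the system is never pruned, no matter how semi-regular its specialization is, and each such branch costs $2^{\gamma n}$ in exhaustive search. If the input system had, say, exponentially many solutions, the claimed exponents would be false. What the paper actually assumes (Definition of $\gamma$-strong semi-regularity) is a \emph{counting} bound, not a uniform per-branch property: the number of specializations that escape pruning -- those lying under a true solution \emph{plus} those whose witness degree exceeds $d_0$ -- is at most $2^{(1-2\gamma+2F_\alpha(\gamma))n}$. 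Multiplying by the $2^{\gamma n}$ cost of each residual search gives $2^{(1-\gamma+2F_\alpha(\gamma))n}$, which is dominated by the $2^{(1-\gamma+\theta F_\alpha(\gamma))n}$ cost of the consistency tests since $\theta\ge 2$. Without isolating this counting hypothesis (and in particular the bound on the number of solutions of the original system), your analysis only accounts for the linear-algebra phase and leaves the second-stage searches uncontrolled.
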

Recall that for a probabilistic algorithm of the Las Vegas type, the result is always correct, but the complexity is a random variable. Here its expectation is controlled well. 

\noindent\textbf{Outline.}
Our algorithm is a variant of the hybrid
approach by \cite{BetFauPer09,BetFauPer12}: we specialize the last \(k\) variables to all possible values, and check the consistency of the specialized overdetermined systems~$(\tilde{f}_1,\dots,\tilde{f}_m)$ in the remaining variables~$x_1,\dots,x_\ell$.

This consistency check is done by searching for polynomials~$h_1,\dots,h_{m+\ell}$ in $x_1,\dots,x_\ell$ such that
\begin{equation}\label{consistency}
h_1\tilde{f}_1+\dots+h_m\tilde{f}_m+h_{m+1}x_1(1-x_1)+\dots+h_{m+\ell} x_\ell(1-x_\ell)=1.
\end{equation}
If such polynomials exist then obviously the system is not consistent. Given a bound~$d$ on the degrees of the polynomials~$h_i\tilde{f}_i$ and $h_{m+i}x_i(1-x_i)$, the existence of the $h_i$ can be checked by linear algebra. The corresponding matrix is known as the Macaulay matrix in degree \(d\). It 
is a matrix whose rows contain the coefficients of the
polynomials \(\tilde{f}_{i}\) and $x_i(1-x_i)$ multiplied by all monomials of degree
at most $d-2$, each column corresponding to a monomial of degree at most~$d$. Taking into account the special shape of the polynomials~$x_i(1-x_i)$ leads to a more compact variant that we call the boolean Macaulay matrix (see Section \ref{sec:algo}).

When linear algebra on the Macaulay matrix in degree~$d$ produces a solution of~\eqref{consistency}, the corresponding~$h_i$'s give a certificate of inconsistency.
Otherwise, our algorithm proceeds with an exhaustive search in the remaining variables.
In summary, our algorithm is a partial exhaustive search where the Macaulay matrices
permit to prune branches of the search tree. The correctness of the algorithm is clear.

The key point making the algorithm efficient is the choice of~\(k\) and $d$. If $d$ is large, then the cost of the linear algebra stage becomes high. If $d$ is small, the matrices are small, but many branches with no solutions are not pruned and require an exhaustive search. This is where we use the relation between the Macaulay matrix and Gr\"obner bases. We define a \emph{witness degree}
$\dwit$, which has the property that any polynomial in a minimal
Gr\"obner basis of the system 
is obtained as a linear
combination of the rows of the Macaulay matrix in degree \(\dwit \).
Hilbert's Nullstellensatz states that the system has no solution \emph{if and only if}
$1$ belongs to the ideal generated by the polynomials, which implies that~1 is a linear combination of the rows of the Macaulay matrix in degree \(\dwit\), making $\dwit$ an upper bound for the choice of~$d$ in~\eqref{consistency}.

Our complexity estimates rely on a good control of the witness degree. For a homogeneous polynomial ideal, the classical Hilbert function of the degree $d$ is the dimension of the vector space obtained as the quotient of the polynomials of degree~$d$ by the polynomials of degree~$d$ in the ideal.  
The witness degree is bounded by the first degree where the Hilbert function of the ideal generated by the homogenized equations is~0.
Under the algebraic assumption of boolean semi-regularity (see Definition \ref{def:bool semi
reg}), we obtain an explicit expression for the generating series of the Hilbert function, known as the Hilbert series of the ideal. From there, in Proposition \ref{prop:dregh}, using the saddle-point method as in~\citep{BarFauSal04,BarFauSal05,Bar04}, we
show 
 that when  
   \(m= \alpha n\) and \(n\rightarrow\infty\), the witness degree behaves like \({\dwit}\le c_{\alpha}n\) for a constant \(c_\alpha\) that we determine explicitly.
Informally, boolean semi-regularity amounts to demanding a ``sufficient'' independence of the equations. In the case of infinite fields, a classical conjecture by \cite{Fro85} states that generic systems are semi-regular. In our context where the field is~$\mathbb{F}_2$, we give strong experimental evidence (Section \ref{sec:numBeta}) that for $n$ sufficiently large, boolean semi-regularity holds with probability very close to~1 for random systems.  
Thus, our complexity estimates for boolean semi-regular systems apply to a large class of systems in practice.

Once the witness degree is controlled, the size of the Macaulay matrix depends only on the choice of~$k$ and the optimal choice  depends on the complexity of the linear algebra stage.
In the Las Vegas version of Algorithm \textsf{BooleanSolve}, we exploit the sparsity of this matrix by using a variant of Wiedemann's algorithm~\citep{GieLobSau98} (following \cite{Wie86,KalSau91,Vil97}) for solving singular linear systems. 
In the deterministic version, we do not know of efficient ways to take advantage of the sparsity of the matrix, whence a slightly higher complexity bound.
We can then draw conclusions and obtain  a complexity estimate of the algorithm depending on $k/n$ and \(n\) (Proposition \ref{prop:complFirstStep}). The optimal value for \(k\) is \(\simeq 0.45\,n \) in the Las Vegas setting and \(\simeq 0.59\,n \) in the deterministic variant, completing the proof of our main theorem.

The complexity analysis is especially
important for practical applications in multivariate Cryptology based
on the Boolean MQ problem, since it shows that in order to reach a security of
$2^{s}$ (with $s$ large), one has to construct systems of boolean
quadratic equations with at least $s/0.7911\simeq 1.264 s$ variables.

\medskip

\noindent\textbf{Related works.}
Due to its practical importance, many algorithms have been designed to
solve the MQ problem in a wide range of contexts. First, generic
techniques for solving polynomial systems can be used. In particular,
Gr\"obner basis algorithms (such as Buchberger's algorithm
\citep{Buc65}, $F_4$ \citep{Fau99}, $F_5$ \citep{Fau02}, and FGLM
\citep{FauGiaLazMor93}) are well suited for this task. For instance,
the $F_5$ algorithm has broken several challenges of the HFE 
public-key cryptosystem \citep{FauJou03}. In the cryptanalysis
context, the XL algorithm~\citep{KipSha99} (which can be seen as a variant of Gr\"obner
basis algorithms \citep{ArsFauImaKawSug04}) has given rise to a large family of variants. All these
techniques are closely related to the Macaulay matrix,
introduced by \cite{Mac03} as a tool for elimination.
In order to reduce the cost of linear algebra for the efficient computation of the
resultant of multivariate polynomial systems, the idea of using
Wiedemann's algorithm on the Macaulay matrix has been proposed by \cite{CKY89}; however since the specificities of the Boolean
case are not taken into account, the complexity of applying \citep{CKY89}
to quadratic equations is \(O(2^{4\,n})\).

\cite{YanChe04} propose a heuristic analysis of the
FXL algorithm leading them to an upper bound $O(2^{0.875n})$ for the
complexity of solving the MQ problem over $\mathbb{F}_2$. In
particular, they give an explicit formula for the Hilbert series of
the ideal generated by the polynomials. However, 
the exact assumptions that
have to be verified by the input systems are unclear. Also, similar
results have been announced in \cite[Section 2.2]{CouYanChe04}, but
the analysis there relies on algorithmic assumptions (e.g., row
echelon form of sparse matrices in quadratic complexity) that are not
known to hold currently. Under these assumptions, the authors show that the
best trade-off between exhaustive search and row echelon form
computations in the FXL algorithm is obtained by specializing $0.45 n$ variables. This is the same value we obtain and prove with our algorithm. Also, a limiting behavior of the cost of the hybrid approach is obtained in~\cite{BetFauPer12} when the size of the finite field is big enough; these results are not applicable
over \(\mathbb{F}_{2}\).

Other algorithms have been proposed when the system has additional structural
properties. In particular, the Boolean MQ
problem also arises in satisfiability problems, since boolean
quadratic polynomials can be used for representing constraints. 
In these contexts, the systems are sparse and for such systems of higher degree the $2^n$ barrier has been broken~\citep{Sem08,Sem09}; similar results also exist for the $k$-SAT problem. Our algorithm does not exploit the extra structure induced by this type of sparsity and thus does not improve upon those results.

\medskip
\noindent\textbf{Organization of the article.} 
The main algorithm and the algebraic tools that are used throughout
the article are described in Section~\ref{sec:algo}. Then a complexity
analysis is performed in Section~\ref{sec:compl} by studying the
asymptotic behaviour of the witness degree and the sizes of the
Macaulay matrices involved, under algebraic assumptions. In
Section~\ref{sec:expmodel}, we provide a conjecture and strong
experimental evidence that these algebraic assumptions are verified
with probability close to $1$ for $n$ sufficiently large. Finally, in
Section \ref{sec:appli} we propose an extension of the main algorithm
that improves the quality of the linear filtering when $n$ is small. We
also show how the complexity results from Section \ref{sec:compl} can
be applied to the cryptosystem QUAD, leading to an evaluation of the sizes of the
parameters needed to reach a given level of security.

\section{Algorithm}
\label{sec:algo}
\noindent\textbf{Notations.}
Let $m$ and $n$ be two positive integers and let
$R$ be the ring $\mathbb{F}_2[x_1,\ldots, x_n]$.  In the
following, the notation $\mathsf{Monomials}(d)$ stands for the set of monomials in
$R$ of degree at most $d$. 

Since we are looking for solutions of the system in $\mathbb{F}_2$
(and not in its algebraic closure), we have to take into account the relations
$x_i^2-x_i=0$. Therefore, we consider the application $\varphi$ 
mapping a monomial to its square-free part ($\varphi(\prod_{i=1}^n x_i^{a_i})=\prod_{i=1}^n x_i^{\min(a_i,1)}$) and extended to $R$ by
linearity.

If $(f_1,\ldots, f_m)\in\mathbb{F}_2[x_1,\ldots, x_n]^m$ is a system of polynomials, its homogenization is denoted by the system $(f_1^{(h)},\ldots,
f_m^{(h)})\in \mathbb{F}_2[x_1,\ldots, x_n,h]$ and is
defined by
$$f_i^{(h)}(x_1,\ldots, x_n, h)=h^{\deg{(f_i)}} f_i\left(\frac{x_1}{h},\ldots, \frac{x_n}{h}\right).$$

In the sequel, we consider the classical \emph{grevlex} monomial ordering (\textbf{g}raded \textbf{rev}erse \textbf{lex}icographical), as defined for instance in \cite[\S2.2, Defn. 6]{CoxLitShe97}.
Also, if $f$ is a polynomial, $\LM(f)$ denotes its leading monomial for that order. If $I$  is an ideal, then $\LM(I)$ denotes the ideal generated by the leading monomials of all polynomials in $I$.

\subsection{Macaulay matrix}
\begin{defn}
  Let $(f_1,\ldots, f_m)$ be 
  polynomials in $R$. The \emph{boolean Macaulay
  matrix} in degree~$d$ (denoted by $\mathsf{Macaulay}(d)$) is the matrix
  whose rows contain the coefficients of the polynomials $\{\varphi(t
  f_j)\}$ where $1\leq j\leq m$, $t$ is a squarefree monomial, and $\deg(t f_j)=d$. The columns
 correspond to the
  squarefree monomials in $R$ of degree at most $d$ and are ordered in descending order with respect to the grevlex ordering. The element in the row corresponding to $\varphi(t f_j)$ and the
  column corresponding to the monomial $m$ is the coefficient of $m$
  in the polynomial $\varphi(t f_j)$.
\end{defn}

Note that the boolean Macaulay matrix can be obtained as a submatrix
of the classical Macaulay matrix of the system $\langle
f_1,\ldots,f_m, x_1^2-x_1,\ldots, x_n^2-x_n\rangle$ after Gaussian reduction by
the rows corresponding to the polynomials $(x_1^2-x_1,\ldots,
x_n^2-x_n)$.

\begin{lem}\label{lem:inconsistent}
  Let $\mathsf M$ be the $r_{\mathsf{Mac}}\times c_{\mathsf{Mac}}$ boolean Macaulay matrix of the system $(f_1,\ldots,
  f_m)$ in degree $d$. 
  Let $\mathbf r$
  denote the $1\times c_{\mathsf{Mac}}$ vector 
$\mathbf r=(0,\dots,0,1)$.
  If the linear system $\mathbf u\cdot\mathsf M=\mathbf r$ has a solution,
  then the system $f_1=\dots = f_m=0$ has no solution in $\mathbb{F}_2^n$.
\end{lem}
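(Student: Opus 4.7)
The plan is to translate the matrix equation $\mathbf{u}\cdot \mathsf{M}=\mathbf{r}$ into a polynomial identity and then evaluate it at a hypothetical common zero to derive a contradiction.

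First I would unpack the definition of $\mathsf{Macaulay}(d)$. Its rows are indexed by pairs $(t,j)$ with $t$ squarefree, $\deg(tf_j)=d$, and carry the coefficient vector of $\varphi(tf_j)$. The columns are indexed by squarefree monomials of degree $\le d$ in descending \emph{grevlex} order, so the last column corresponds to the monomial $1$. Consequently, the matrix-vector equation $\mathbf{u}\cdot \mathsf{M}=\mathbf{r}=(0,\dots,0,1)$ is equivalent to the polynomial identity
\begin{equation*}
\sum_{(t,j)} u_{t,j}\,\varphi(tf_j) \;=\; 1
\end{equation*}
in $\mathbb{F}_2[x_1,\dots,x_n]$.

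Next I would use the defining property of $\varphi$: for any polynomial $g\in R$ and any $z\in\mathbb{F}_2^n$, one has $\varphi(g)(z)=g(z)$, since $z_i^2=z_i$ makes the square-free reduction agree with $g$ on $\mathbb{F}_2^n$ monomial by monomial. Suppose now for contradiction that there exists $z\in\mathbb{F}_2^n$ with $f_1(z)=\cdots=f_m(z)=0$. Then for every row index $(t,j)$ we have
\begin{equation*}
\varphi(tf_j)(z)\;=\;(tf_j)(z)\;=\;t(z)\,f_j(z)\;=\;0,
\end{equation*}
so evaluating the above polynomial identity at $z$ yields $0=1$ in $\mathbb{F}_2$, a contradiction. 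Hence no such $z$ exists.

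There is essentially no hard step here; the only subtlety worth flagging is making sure that the last column of $\mathsf{Macaulay}(d)$ really corresponds to the constant monomial $1$ (guaranteed by the descending \emph{grevlex} ordering), so that $\mathbf{r}=(0,\dots,0,1)$ encodes the polynomial $1$ rather than some other monomial. Once that is noted, the argument is just the combination of the polynomial identity with the elementary fact that $\varphi$ preserves values on $\mathbb{F}_2^n$.
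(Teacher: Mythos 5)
Your proof is correct and follows essentially the same route as the paper's: both translate $\mathbf{u}\cdot\mathsf{M}=\mathbf{r}$ into the polynomial identity $\sum u_{t,j}\varphi(tf_j)=1$, the paper then concluding $1\in\langle f_1,\dots,f_m,x_1^2-x_1,\dots,x_n^2-x_n\rangle$ while you make the final (implicit) step explicit by evaluating at a hypothetical common zero. Your observation that the descending grevlex ordering places the constant monomial in the last column is a worthwhile detail the paper leaves unstated.
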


\begin{proof}
If the system $\mathbf u\cdot\mathsf M=\mathbf r$ has a solution, then there exists a linear combination of the rows of the Macaulay matrix which yields the constant polynomial $1$. Therefore, $1\in\langle f_1,\ldots, f_m,x_1^2-x_1,\ldots, x_n^2-x_n\rangle$.
\end{proof}
\subsection{Witness degree}
We consider an indicator of the 
complexity of affine polynomial systems: the \emph{witness degree}. It
has the property that a Gr\"obner basis of the ideal generated by the
polynomials can be obtained by performing linear algebra on the
Macaulay matrix in this degree. In particular, if the system has no
solution, then the witness degree is closely related to the classical
\emph{effective
  Nullstellensatz} (see e.g., \cite{Jelonek2005}).

\begin{defn}\label{defn:dlin}
Let $\mathbf F=(f_1,\ldots, f_m, x_1^2-x_1,\ldots, x_n^2-x_n)$ be a sequence of polynomials and $I=\langle \mathbf F\rangle$ the ideal it generates.
Denote by $I_{\leq d}$ and by $J_{\leq d}$ the $\mathbb{F}_2$-vector spaces defined by
$$\begin{array}{rccl}
I_{\leq d}&=&\{p \mid& p\in I, \deg(p)\leq d\},\\
J_{\leq d}&=&\{p \mid& \exists h_1,\ldots, h_{m+n}, \forall i\in \{1,\ldots, m+n\}, \deg(h_i)\leq d-2,\\
&&&p=\sum_{i=1}^m h_i f_i + \sum_{j=1}^n h_{m+j} (x_j^2-x_j)\}.\end{array}$$

\noindent We call \emph{witness degree} $(\dwit)$ of $\mathbf F$ the smallest integer $d_0$ such that $I_{\leq d_0}=J_{\leq d_0}$ and $\langle\{\LM(f) \mid f\in I_{\leq d_0}\}\rangle=\LM(I)$.
\end{defn}

Consider a row echelon form of the boolean Macaulay matrix in degree $d$ of
the system $(f_1,\dots,f_m)$ of polynomials. Then the first nonzero element of each row corresponds to a leading monomial of an element of~$I$, belonging to~$\LM(I)$. For large enough $d$,
Dickson's lemma \cite[\S2.4, Thm. 5]{CoxLitShe97} implies that the collection of those monomials up to degree~$d$ generates~$\LM(I)$ and thus the polynomials corresponding to those rows together with
$\{x_1^2-x_1,\ldots, x_n^2-x_n\}$ form a Gr\"obner basis of~$I$
with
respect to the grevlex ordering. Another interpretation of the \emph{witness degree} is that it is precisely the smallest such $d$.

\subsection{Algorithm}
\begin{algorithm}
\caption{BooleanSolve}
\label{algo:booleansolve}
\begin{algorithmic}[1]
\Require $m, n, k\in\mathbb N$ such that $m\geq n>k$ and $f_1,\ldots,f_m$ quadratic polynomials in $\mathbb{F}_2[x_1,\ldots, x_n]$.
\Ensure The set of boolean solutions of the system $f_1=\dots=f_m=0$.
\State $S:=\emptyset$. 
\State ${d_{0}}:=$ index of the first nonpositive coefficient in the series expansion at~0 of the rational function $\frac{(1+t)^{n-k}}{(1-t)(1+t^2)^{m}}$.
\ForAll{$(a_{n-k+1},\ldots, a_n)\in\mathbb F_2^{k}$}
\For{$i$ from $1$ to $m$}
\State $\tilde{f}_i(x_1,\ldots,x_{n-k}):=f_i(x_1,\ldots, x_{n-k},a_{n-k+1},\ldots, a_n)\in \mathbb{F}_2[x_1,\ldots, x_{n-k}]$.
\EndFor 
\State $\mathsf{M}:=$ boolean Macaulay matrix of $(\tilde{f}_1,\ldots,\tilde{f}_m)$ in degree ${d_{0}}$.
\If{the system $\mathbf u\cdot\mathsf M=\mathbf r$ is inconsistent}\Comment{$\mathbf r$ as defined in Lemma~\ref{lem:inconsistent}}
\State $T:=$ solutions of the system ($\tilde{f}_1=\cdots=\tilde{f}_m=0$) (exhaustive search).
\ForAll{$(t_1,\ldots, t_{n-k})\in T$}
\State $S:=S\cup \{(t_1,\ldots, t_{n-k},a_{n-k+1},\ldots, a_n)\}$.
\EndFor
\EndIf
\EndFor
\State \Return $S$.
\end{algorithmic}
\end{algorithm}

Our algorithm is given in Algorithm \ref{algo:booleansolve}.
The general principle is to perform an exhaustive search in two steps,
using a test of consistency of the Macaulay matrix to prune most of the
branches of the second step of the search.

When the system $\mathbf u\cdot\mathsf M=\mathbf r$ is
consistent, the corresponding branch of the searching tree is not explored.
In that case, by Lemma~\ref{lem:inconsistent}, any solution of the linear system $\mathbf u\cdot \mathsf
M=\mathbf r$ can be used as a certificate that there exists no solution of the
polynomial system $f_1=\dots=f_m=0$ in this branch.

\begin{prop}\label{algo_correct}Algorithm~{\sf{BooleanSolve}} is correct and solves the Boolean MQ\ problem.
\end{prop}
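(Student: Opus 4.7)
The plan is to verify that Algorithm~\ref{algo:booleansolve} terminates, that every element it returns is a solution, and that every solution of the system is returned. Termination is immediate since the outer loop ranges over the finite set $\mathbb{F}_2^k$ and every step inside is a finite computation (construction of a Macaulay matrix of fixed size, a rank test, and an exhaustive search on $n-k$ variables).

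For soundness, I would simply unfold the construction: each $(t_1,\dots,t_{n-k},a_{n-k+1},\dots,a_n)$ put into $S$ comes from a tuple $(t_1,\dots,t_{n-k})$ that the exhaustive search has certified to satisfy $\tilde f_1=\dots=\tilde f_m=0$. By the definition of $\tilde f_i$ as the specialization $f_i(x_1,\dots,x_{n-k},a_{n-k+1},\dots,a_n)$, this is exactly the statement that $f_i$ vanishes at the full tuple for every $i$.

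The central point is completeness: no branch that contains a solution is ever pruned. For this I would fix any solution $z=(z_1,\dots,z_n)\in\mathbb{F}_2^n$ and look at the iteration in which $(a_{n-k+1},\dots,a_n)=(z_{n-k+1},\dots,z_n)$. Then $(z_1,\dots,z_{n-k})$ is a common $\mathbb{F}_2$-zero of the specialized system and of $x_1^2-x_1,\dots,x_{n-k}^2-x_{n-k}$. Since these latter polynomials already force any zero in the algebraic closure to lie in $\mathbb{F}_2^{n-k}$, the ideal $\langle \tilde f_1,\dots,\tilde f_m,x_1^2-x_1,\dots,x_{n-k}^2-x_{n-k}\rangle$ has a zero over $\overline{\mathbb{F}_2}$, so by Hilbert's Nullstellensatz it does not contain $1$. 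In particular $1$ is not a linear combination of the rows of the boolean Macaulay matrix $\mathsf{M}$ in degree $d_0$ (those rows lie in the ideal), so $\mathbf u\cdot \mathsf M=\mathbf r$ is inconsistent. The algorithm therefore enters the exhaustive-search branch, finds $(z_1,\dots,z_{n-k})$, and inserts $z$ in $S$.

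Conversely, whenever a branch is skipped, Lemma~\ref{lem:inconsistent} applied to $(\tilde f_1,\dots,\tilde f_m)$ guarantees that the specialized system admits no $\mathbb{F}_2$-solution, so no solution of the original system with suffix $(a_{n-k+1},\dots,a_n)$ is lost. Combined, these two directions give $S=\{z\in\mathbb{F}_2^n : f_1(z)=\dots=f_m(z)=0\}$. Nothing in this argument depends on the particular value of $d_0$ chosen in the algorithm: correctness holds for \emph{any} degree bound, and the specific $d_0$ matters only for the complexity analysis carried out later. The only genuinely non-trivial ingredient is the use of the Nullstellensatz to conclude inconsistency of $\mathbf u\cdot \mathsf M=\mathbf r$ from the existence of a solution; this is the step I would emphasize as the logical heart of the proof.
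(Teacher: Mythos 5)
Your proof is correct and follows essentially the same route as the paper: the pruning step only discards branches certified inconsistent by Lemma~\ref{lem:inconsistent}, the exhaustive search handles everything else, and the value of $d_0$ plays no role in correctness. One remark: your appeal to Hilbert's Nullstellensatz is unnecessary overkill --- the direction you need (an ideal with a common zero cannot contain $1$, since evaluating at that zero would give $1=0$) is trivial and is exactly the contrapositive of Lemma~\ref{lem:inconsistent}; the Nullstellensatz proper enters only in the complexity analysis, to guarantee that branches \emph{without} solutions are actually pruned at degree $d_0$, which is an efficiency issue rather than a correctness one.
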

\begin{proof}
By Lemma~\ref{lem:inconsistent}, if the test in line~8 finds the linear system to be consistent, then there can be no solution with the given values of~$(a_{n-k+1},\dots,a_n)$. Otherwise, the exhaustive search proceeds and cannot miss a solution. It is important to note that the choice of the actual value~${d}_0$ does not have any impact on the correctness of the algorithm.
\end{proof}

Algorithm~{\sf{BooleanSolve}} is easily be adapted to solve the Boolean MQ\  SAT problem by replacing
lines 9-12 of the previous algorithm by:
\begin{algorithm}[H]
\begin{algorithmic}[1]
\setcounter{ALG@line}{8}
\State $T:=$ at least one solution of the system ($\tilde{f}_1=\cdots=\tilde{f}_m=0$) (modified exhaustive search).
\If{\(T\neq\emptyset\)} 

\State  \Return \(\{(t_1,\ldots, t_{n-k},a_{n-k+1},\ldots, a_n)\mid
(t_1,\ldots, t_{n-k})\in T
\}\)
\EndIf
\end{algorithmic}
\end{algorithm}

\subsection{Testing Consistency of Sparse Linear Systems}
\label{sec:wiedemann}
The choice of the algorithm to test whether the sparse linear system $\mathbf u\cdot\mathsf M=\mathbf r$
is consistent or not is crucial for the efficiency of Algorithm {\sf BooleanSolve}. 
A simple deterministic algorithm consists in computing a row echelon form of the matrix: the linear system is consistent if and only if the last nonzero row of the row echelon form is equal to the vector $\mathbf r$.
We show in Section~\ref{sec:compl} that this is sufficient to pass below the $2^n$ complexity barrier. 
We recall for future use the complexity of this method.
\begin{prop}[\cite{Storjohann2000}, Proposition 2.11]\label{prop:row_ech} The row echelon form of an $N\times M$ matrix over a field~$k$ can be computed in~$O(NMr^{\theta-2})$ arithmetic operations in~$k$, where~$r$ is the rank of the matrix and $\theta\le3$ is such that any two~$n\times n$ matrices over~$k$ can be multiplied in~$O(n^\theta)$ arithmetic operations in~$k$.
\end{prop}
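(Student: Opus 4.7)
The plan is to reduce the computation of the row echelon form to a block-recursive algorithm in which the non-trivial work is expressed in terms of matrix--matrix products, so as to benefit from the exponent $\theta$. The key observation is that if $\mathsf A$ has rank $r$, then after suitable row and column permutations one may assume that the leading $r \times r$ submatrix is nonsingular; the row echelon form is then controlled by an LU-type decomposition of this block together with a few Schur-complement updates.

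First I would treat the square full-rank case $N = M = r$ via a standard block-recursive LU algorithm. Partition $\mathsf A$ into four blocks of size roughly $r/2$; recursively decompose the top-left block, use the result to update the remaining three blocks by two triangular system solves and one matrix product (each of cost $O(r^\theta)$), and then recurse on the Schur complement in the bottom-right. The recurrence $T(r) = 2 T(r/2) + O(r^\theta)$ resolves to $T(r) = O(r^\theta)$ whenever $\theta > 2$. Then I would extend this to an arbitrary $N \times M$ matrix of rank $r$ by tiling: view $\mathsf A$ as a grid of at most $\lceil N/r\rceil \cdot \lceil M/r\rceil$ blocks of size $r \times r$, run the square algorithm on a carefully chosen pivot block, and update the remaining blocks by $O((N/r)(M/r))$ matrix products of size $r \times r$. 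The total cost is then
\[
O\!\left(\frac{N}{r}\cdot\frac{M}{r}\cdot r^\theta\right) \;=\; O(NM\, r^{\theta-2}),
\]
which is the target bound.

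The main obstacle, and the real content of Storjohann's proof, is that the rank $r$ is not known in advance, and that a candidate $r\times r$ pivot block encountered along the recursion may itself be singular even when the ambient matrix has full rank $r$. Handling this requires incrementally maintaining a \emph{rank profile} of the matrix and, at each recursive step, searching inside the current pivot region for a maximal nonsingular submatrix, together with row and column permutations that bring it into the leading position. The careful bookkeeping in the style of Ibarra--Moran--Hopcroft guarantees that these searches and permutations contribute only lower-order terms, so that the asymptotic bound $O(NM\, r^{\theta-2})$ is preserved independently of whether the matrix is square, rectangular, full rank, or rank-deficient.
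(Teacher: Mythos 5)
The paper gives no proof of this proposition: it is imported verbatim from Storjohann's thesis (Proposition 2.11), so there is no internal argument to compare against. Your sketch follows exactly the route of the cited source and its predecessors (block-recursive LU/LSP decomposition in the style of Ibarra--Moran--Hopcroft, with rank-profile maintenance to handle unknown rank and singular pivot blocks), and the cost accounting $2T(r/2)+O(r^\theta)$ for the square case plus $O((N/r)(M/r))$ block updates for the rectangular case is the right skeleton. Be aware, though, that as written your last paragraph names the genuinely hard part --- showing that the search for nonsingular pivot blocks and the attendant permutations can be organized so as not to degrade the $O(NMr^{\theta-2})$ bound --- without actually carrying it out; that bookkeeping is the substance of Storjohann's proof, so your text is a correct outline of the standard argument rather than a self-contained proof.
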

Here, $\theta=3$ is the cost of classical matrix multiplication and in this case a simple Gaussian reduction to row echelon form is sufficient. The best known value for $\theta$ has been~2.376 for a long time, by a result of~\cite{CoWi90}. Recent improvements of that method by~\cite{Stothers10,VassilevskaWilliams11} have decreased it to~2.3727, but this does not have a significant impact on our analysis.

This result does not exploit the sparsity of Macaulay matrices. We do not know of an efficient deterministic algorithm for row reduction that exploits this sparsity. 
Instead, we use an efficient Las Vegas variant of Wiedemann's algorithm due to \cite{GieLobSau98}, whose 
specification is summarized in Algorithm~{\sf TestConsistency}. In this algorithm, the matrix~$A$ is given by two black boxes performing the operations~$x\mapsto Ax$ and $u\mapsto A^tu$. The complexity is expressed in terms of the number of evaluations of these black boxes, which in our context will each have a cost bounded by the number of nonzero coefficients of Macaulay matrices. The algorithm is presented in \citep{GieLobSau98} for matrices with entries in an arbitrary field. We specialize it here in the case where the field is $\mathbb{F}_2$. The key ideas are a preconditioning of the matrix by multiplying it by random Toeplitz matrices and working in a suitable field extension to get access to sufficiently many points for picking random elements.

\begin{algorithm}
\caption{TestConsistency \citep{GieLobSau98}}
\label{algo:testinconsistency}
\label{alg:sparsesolver}
\begin{algorithmic}[1]
\Require\begin{itemize}
\item A black box for $\mathbf x\mapsto \mathsf A\cdot \mathbf x$, where $\mathsf A\in \mathbb K^{N\times N}$.
\item A black box for $\mathbf u\mapsto \mathsf A^t\cdot\mathbf u$.
\item $\mathbf b\in \mathbb K^{N\times 1}$.
\end{itemize}
\Ensure \begin{itemize}
\item (``consistent'',$\mathbf x$) with $\mathsf A\cdot \mathbf x=\mathbf b$ if the system has a solution
\item (``inconsistent'',$\mathbf u$) if the system does not have a solution, with $\mathbf u^t\cdot\mathsf A=0$ and $\mathbf u^t\cdot\mathbf b\neq 0$, certifying the inconsistency.
\end{itemize}
\end{algorithmic}
\end{algorithm}

\begin{prop}[\cite{GieLobSau98}]\label{thm:wiedemann}
Algorithm \ref{alg:sparsesolver} 
determines the consistency of an $N\times N$ matrix with expected
complexity $O(N \log N )$ evaluations of the black boxes and 
$O(N^2\log^2N\log\log N)$ additional operations in
$\mathbb{F}_2$. 
\end{prop}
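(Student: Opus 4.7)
The plan is to follow Wiedemann's coordinate-recurrence framework, extended in the way Kaltofen--Saunders and then Giesbrecht--Lobo--Saunders do, and to account carefully for the overhead due to the tiny cardinality of $\mathbb F_2$. First I would recall the basic Wiedemann idea: for vectors $u, v \in \mathbb K^N$ the scalar sequence $s_i = u^t A^i v$ is linearly recurrent, and the minimal polynomial $\mu_{u,v}$ of the first $2N$ terms (obtained by Berlekamp--Massey) divides the minimal polynomial of $A$; for generic $u, v$ it is equal to it. From $\mu_A(T) = T \cdot q(T) + c$, a solution of $Ax=b$, when it exists, is recovered as $x = -c^{-1} q(A) b$, requiring another $O(N)$ black-box applications of $A$.

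To adapt this to the singular/consistency case one preconditions: replace $A$ by $\widetilde A = U A V$ where $U, V$ are independent random Toeplitz matrices. A classical result (Kaltofen--Saunders) guarantees that, when the entries of $U, V$ are drawn from a sufficiently large field, the leading $r \times r$ principal minor of $\widetilde A$ is almost surely invertible, where $r = \mathrm{rank}(A)$. Wiedemann then either returns a candidate solution $\widetilde x$, which is checked by one further application of $A$, or returns a vector $u$ with $u^t A = 0$; one then tests $u^t b$ to decide consistency or to output an inconsistency certificate.

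Since $\mathbb F_2$ is far too small for the Schwartz--Zippel-style non-vanishing estimate underlying the preconditioning lemma, the crucial second ingredient is to lift the entire computation into an extension $\mathbb K = \mathbb F_{2^k}$ with $k = \Theta(\log N)$, so that the probability of failure becomes $O(1/N)$. Any solution or left nullvector of the $\mathbb F_2$-linear system computed over $\mathbb K$ yields one over $\mathbb F_2$ by projecting onto a basis of $\mathbb K / \mathbb F_2$, so correctness of the Las Vegas output is preserved. This lifting is the delicate step: one must verify that the Kaltofen--Saunders genericity conditions translate into polynomial conditions on the Toeplitz entries that a uniformly random element of $\mathbb K$ avoids with high probability, and that the conversion back to $\mathbb F_2$ does not destroy the certificate property $u^t A = 0$, $u^t b \neq 0$.

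The complexity then falls out by bookkeeping. Generating the Krylov sequence requires $O(N)$ applications of $A$, $A^t$, $U$, $V$ over $\mathbb K$; each $A$- or $A^t$-application over $\mathbb K$ decomposes into $k = O(\log N)$ applications over $\mathbb F_2$, giving the stated $O(N \log N)$ black-box evaluations. Toeplitz--vector products cost $O(N \log N)$ operations in $\mathbb K$ each via FFT, and Berlekamp--Massey on a sequence of length $2N$ over $\mathbb K$ costs $O(N^2)$ operations in $\mathbb K$; multiplication in $\mathbb F_{2^k}$ with $k = O(\log N)$ needs $O(\log N \log \log N)$ bit operations by fast polynomial arithmetic, yielding the additional $O(N^2 \log^2 N \log \log N)$ operations in $\mathbb F_2$. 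The main obstacle in the write-up would be a clean statement and proof of the probability bound for the preconditioning step after the lift to $\mathbb F_{2^k}$, since this is what ultimately justifies the ``expected'' in the complexity claim and the Las Vegas nature of the algorithm.
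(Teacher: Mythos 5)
The paper does not actually prove this proposition: it is imported verbatim from Giesbrecht--Lobo--Saunders, and the text only summarizes the two key ideas (preconditioning by random Toeplitz matrices and lifting to an extension $\mathbb{F}_{2^k}$ with $k=\Theta(\log N)$ to make the Schwartz--Zippel-type probability bounds work). Your sketch reconstructs exactly those ingredients and the accompanying bookkeeping correctly, so it matches the approach the paper relies on.
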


Macaulay matrices are rectangular. We therefore first make them square by padding with zeroes. The complexity estimate is then used with~$N$ the maximum of the row and column dimensions of the matrices.

\section{Complexity Analysis}
\label{sec:compl}
Algorithm {\sf BooleanSolve} deals with a large number of Macaulay matrices in degree~$d_0$. We first obtain bounds on the row and column dimensions of Macaulay matrices, as well as their number of nonzero entries, in terms of the degree. We then 
bound the witness degree by~$d_0$. The complexity analysis is concluded by optimizing the value of the ratio $k/n$ that governs the number of variables evaluated in the first exhaustive search.

\subsection{Sizes of Macaulay Matrices}
\begin{prop}\label{prop:sizemac}
Let $(f_1,\dots,f_m)$ be quadratic polynomials in~${\mathbb F}_2[x_1,\dots,x_n]$. 
Denote by $r_{\mathsf{Mac}}$ (resp. $c_{\mathsf{Mac}}$, $s_{\mathsf{Mac}}$)
the number of rows (resp. columns, number of nonzero entries) of the associated boolean Macaulay matrix in degree~$d$. If~$1\le d< n/2$, then 
\begin{equation}\label{eq:mac}
c_{\mathsf{Mac}}<\frac{1-x}{1-2x}\binom{n}{d},\,\,
r_{\mathsf{Mac}}<{m}\frac{x^2}{(1-2x)(1-x)}\binom{n}{d},\,\, 
s_{\mathsf{Mac}}<{mn^2}\frac{x^2}{(1-2x)(1-x)}\binom{n}{d},
\end{equation}
where $x=d/n$.
\end{prop}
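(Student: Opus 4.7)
The plan is to read $c_{\mathsf{Mac}}$, $r_{\mathsf{Mac}}$ and $s_{\mathsf{Mac}}$ off directly from the definition of $\mathsf{Macaulay}(d)$, and then bound each by a telescoping geometric series argument in the ratio $x/(1-x)$, whose infinite geometric sum is precisely $(1-x)/(1-2x)$ and converges exactly because of the hypothesis $d<n/2$. The common engine is the elementary identity $\binom{n}{i-1}/\binom{n}{i}=i/(n-i+1)$, which for $1\le i\le d$ gives the uniform bound $i/(n-i+1)\le d/(n-d+1)<d/(n-d)=x/(1-x)$.

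For the columns, the indexing set is the squarefree monomials of $R$ of degree at most $d$, so $c_{\mathsf{Mac}}=\sum_{i=0}^d\binom{n}{i}$. Telescoping the ratio bound above yields $\binom{n}{d-k}<\binom{n}{d}\bigl(x/(1-x)\bigr)^k$ for $k\ge1$, and summing against $\sum_{k\ge0}\bigl(x/(1-x)\bigr)^k=(1-x)/(1-2x)$ produces the first inequality.

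For the rows, each row corresponds to a pair $(t,j)$ with $t$ a squarefree monomial of degree $d-2$ and $j\in\{1,\dots,m\}$, so $r_{\mathsf{Mac}}=m\binom{n}{d-2}$. Applying the same ratio estimate twice gives $\binom{n}{d-2}<\binom{n}{d}\cdot x^2/(1-x)^2$, and weakening the denominator via $1-x>1-2x$ (valid since $x>0$) yields the stated form $x^2/((1-2x)(1-x))$. For the nonzero entries, each row is the coefficient vector of $\varphi(tf_j)$; since multiplication by the single squarefree monomial $t$ is a relabelling and $\varphi$ can only merge terms, each row has at most $|f_j|_{\text{terms}}\le 1+n+\binom{n}{2}<n^2$ nonzero entries (for the relevant range of $n$), so $s_{\mathsf{Mac}}\le n^2\,r_{\mathsf{Mac}}$, and the third bound follows from the second.

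I do not expect any genuine obstacle: the argument is a direct count followed by a telescoping binomial-ratio estimate. The only points of care are tracking the strict inequalities (the one used to pass from $(1-x)^2$ to $(1-2x)(1-x)$ in the row bound, and the strict bound $\binom{n}{d-k}<\binom{n}{d}(x/(1-x))^k$ for $k\ge1$ that justifies strict overall inequality) and observing that the hypothesis $d<n/2$ is exactly what makes the geometric series in $x/(1-x)$ converge.
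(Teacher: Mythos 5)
Your argument follows the paper's proof almost exactly: exact combinatorial counts, the telescoping ratio bound $\binom{n}{i-1}/\binom{n}{i}=i/(n-i+1)<x/(1-x)$ summed as a geometric series with sum $(1-x)/(1-2x)$, and the per-row bound $1+n+\binom{n}{2}\le n^2$ for the nonzero entries. The one place you diverge is the row count. You take $r_{\mathsf{Mac}}=m\binom{n}{d-2}$, reading the definition as requiring $\deg(tf_j)=d$ exactly; the paper counts $r_{\mathsf{Mac}}=m\sum_{i=0}^{d-2}\binom{n}{i}$, i.e.\ all multipliers $t$ of degree \emph{at most} $d-2$. The latter is the intended object: the introduction describes the rows as the $f_j$ ``multiplied by all monomials of degree at most $d-2$,'' and the consistency test must be able to certify any identity $\sum h_if_i+\dots=1$ with $\deg h_i\le d-2$, which requires rows for every degree of $t$ up to $d-2$ (the columns being all monomials of degree at most $d$ only makes sense on that reading). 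So as written you bound a strictly smaller quantity than the matrix actually has rows, and your estimate $\binom{n}{d-2}<\binom{n}{d}\,x^2/(1-x)^2$ does not by itself cover the full sum. The fix is the same geometric-series step you already used for the columns: $\sum_{i=0}^{d-2}\binom{n}{i}$ is bounded by a geometric series headed by $\binom{n}{d-2}$, and a short computation (the paper instead bounds the exact ratio $\binom{n}{d-2}/\binom{n}{d}=d(d-1)/((n-d+1)(n-d+2))$ directly) shows the product still lies below $x^2/((1-2x)(1-x))$. With that one repair your proof matches the paper's.
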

\begin{proof}
The number of columns of the boolean Macaulay matrix is simply the number of squarefree monomials of degree at most~$d$ in $n$ variables. The number of rows is that same number of monomials for degree~$d-2$, multiplied by the number~$m$ of polynomials. Finally, each row corresponding to a polynomial~$f_i$ has a number of nonzero entries bounded by the number of squarefree monomials of degree at most~2 in $n$ variables. 
Standard combinatorial counting thus gives
\begin{equation}\label{eq:size_matrix}
c_{\mathsf{Mac}}=\sum_{i=0}^d{\binom{n}{i}},\qquad r_{\mathsf{Mac}}=m\sum_{i=0}^{d-2}{\binom{n}{i}},\qquad
s_{\mathsf{Mac}}\le \left(1+n+\binom{n}{2}\right)r_{\mathsf{Mac}}\le n^2r_{\mathsf{Mac}},
\end{equation}
where in the last inequality we use the fact that~$n\ge 2$. 
Now, the bounds come from a well-known inequality on binomial coefficients: for $0\le d< n/2$,
\[\sum_{i=0}^{d}{\binom{n}{i}}<\frac{1}{1-d/(n-d)}\binom{n}{d}.\]
Indeed, the sequence $\binom{n}{i}$ is increasing for $0\le i\le n/2$. Factoring out $\binom{n}{d}$ leaves a sum that is bounded by the geometric series $1+d/(n-d)+\dotsb$. This gives the bound for~$c_{\mathsf{Mac}}$. The bound for~$r_{\mathsf{Mac}}$ is obtained by evaluating this bound at~$d-2$, writing $\binom{n}{d-2}$ as a rational function times $\binom{n}{d}$ and finally bounding $x(x-1/n)/((1-2x+4/n)((1-x)+1/n))$ by $x^2/((1-2x)(1-x))$.
\end{proof}

\subsection{Bound on the Witness Degree of Inconsistent Systems}
First, we prove that the witness degree can be upper bounded by the
so-called \emph{degree of regularity} of the homogenized system.  Here
and subsequently, we call \emph{dimension} of an ideal $I\subset R$
the Krull dimension of the quotient ring $R/I$ (see e.g., \cite[\S8]{Eis95}).

\begin{defn}
  The \emph{degree of regularity} $\dreg(I)$ of a
  homogeneous ideal $I$ of dimension $0$ is defined as the smallest integer $d$ such that all homogeneous polynomials of degree $d$ are in $I$.
\end{defn}

\begin{prop}\label{prop:dlindregh}
  Let $\mathbf F=\left(f_1,\ldots, f_m, x_1^2-x_1,\ldots,
    x_n^2-x_n\right)$ be a sequence of polynomials such that the
  system $\mathbf F=0$ has no solution. Then the ideal generated by
  the homogenized
  system   $$I^{(h)}=\left\langle f_1^{(h)},\ldots, f_m^{(h)},x_1^2-x_1
    h,\ldots, x_n^2-x_n h\right\rangle$$
has dimension $0$ and $\dwit(\mathbf F)\leq \dreg(I^{(h)})$.
\end{prop}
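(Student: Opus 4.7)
The plan is to prove both claims by a homogenize/dehomogenize argument exploiting that the homogenized generators $f_i^{(h)}$ and $x_j^2 - x_j h$ all have degree exactly $2$.

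First, for the zero-dimensionality of $I^{(h)}$, I would argue that $V(I^{(h)}) \subseteq \{0\}$ in $\overline{\mathbb F_2}^{n+1}$. Any common zero $(x_1,\ldots,x_n,h)$ of the homogenized generators satisfies $x_j^2 = x_j h$; if $h = 0$ this forces $x_j = 0$ for every $j$, and if $h \ne 0$ then $y_j := x_j/h$ lies in $\{0,1\}$ and satisfies $f_i(y) = 0$, contradicting the assumption that $\mathbf F = 0$ has no solution. Hence $\sqrt{I^{(h)}}$ contains the irrelevant maximal ideal and the Krull dimension of the quotient is $0$.

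For the inequality, set $d = \dreg(I^{(h)})$. Because $\mathbf F$ contains the field equations $x_j^2-x_j$, every geometric zero of $I$ lies in $\mathbb F_2^n$, so emptiness of $V(\mathbf F)$ combined with Hilbert's Nullstellensatz forces $1 \in I$; in particular $\LM(I) = R$. Thus, by Definition~\ref{defn:dlin}, to establish $\dwit(\mathbf F) \le d$ it suffices to show $I_{\le d} = J_{\le d}$: once this equality is known, applying it to $1 \in I_{\le d}$ yields $1 \in J_{\le d}$ and hence $1 \in \langle \{\LM(f) : f \in I_{\le d}\}\rangle$, which already equals $\LM(I)$.

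The inclusion $J_{\le d} \subseteq I_{\le d}$ is immediate, since each term in the definition of $J_{\le d}$ has degree at most $(d-2)+2 = d$. For the reverse inclusion, take $p \in I_{\le d}$ of degree $d' \le d$ and consider the homogeneous polynomial $h^{d-d'} p^{(h)}$ of degree exactly $d$. The definition of $\dreg(I^{(h)})$ places this element in $I^{(h)}$, so there is a decomposition
\[h^{d-d'} p^{(h)} = \sum_{i=1}^m A_i\, f_i^{(h)} + \sum_{j=1}^n B_j\, (x_j^2 - x_j h),\]
and after keeping only the degree-$(d-2)$ homogeneous component of each coefficient I may assume that every $A_i$ and $B_j$ is homogeneous of degree $d-2$. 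Dehomogenizing by $h \mapsto 1$ then expresses $p$ in the form required by $J_{\le d}$.

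The main obstacle is the degree bookkeeping in this last step: the reason the coefficients obtained after dehomogenization satisfy the bound $d-2$ (rather than some larger bound) is precisely that all homogenized generators are of the same degree $2$, so only the degree-$(d-2)$ graded piece of a putative decomposition contributes to a degree-$d$ homogeneous element. A minor secondary point is that the Nullstellensatz should really produce $1 \in I$ over $\mathbb F_2$ and not merely over $\overline{\mathbb F_2}$, which is harmless since the existence of an algebraic identity certifying $1 \in I$ is a linear-algebra statement whose coefficients can be chosen in $\mathbb F_2$.
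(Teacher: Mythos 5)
Your proof is correct and follows essentially the same strategy as the paper's: reduce to the Nullstellensatz, use the definition of $\dreg(I^{(h)})$ to produce a homogeneous membership identity in degree $d=\dreg(I^{(h)})$ with cofactors homogeneous of degree $d-2$, and dehomogenize at $h=1$. Two small divergences are worth recording. For zero-dimensionality, the paper argues combinatorially: since $1\in I$, some power $h^\alpha$ lies in $I^{(h)}$, so $\LM(I^{(h)})$ contains $\langle x_1^2,\dots,x_n^2,h^\alpha\rangle$ and is therefore zero-dimensional, whence $\dim I^{(h)}=\dim\LM(I^{(h)})=0$; your geometric argument ($V(I^{(h)})\subseteq\{0\}$ over the algebraic closure, using the field equations to push any affine zero into $\mathbb{F}_2^n$) reaches the same conclusion and is equally valid. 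For the inequality, the paper exhibits the identity only for the single element $h^{\dreg(I^{(h)})}$, i.e., it shows $1\in J_{\le d}$ and leaves implicit that Definition~\ref{defn:dlin} is thereby satisfied; you correctly observe that $1\in J_{\le d}$ does not by itself yield $I_{\le d}=J_{\le d}$ (multiplying the certificate for $1$ by $p$ inflates the cofactor degrees), and you supply the missing uniform argument by applying the degree-of-regularity identity to $h^{d-d'}p^{(h)}$ for every $p\in I_{\le d}=R_{\le d}$ and extracting the degree-$d$ graded piece. This makes your write-up slightly more complete than the paper's with respect to the letter of the definition of $\dwit$; the mathematical content is otherwise identical, and your side remarks (coefficients of the Nullstellensatz certificate can be taken in $\mathbb{F}_2$; all generators have degree exactly $2$, which is what makes the degree bookkeeping work) are accurate.
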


\begin{proof}
  By Hilbert's  Nullstellensatz, the ideal $I$ generated by
  $\mathbf F$ contains $1$ (hence $1$ is a Gr\"obner basis of
  $I$). Therefore, there exists $\alpha\in\mathbb N\setminus\{0\}$ such that
  $h^\alpha\in I^{(h)}$. Consequently, for the grevlex ordering,
  $\langle x_1^2,\ldots,
  x_n^2,h^\alpha\rangle\subset\LM(I^{(h)})$ and thus the
  dimension of $\LM(I^{(h)})$ is $0$. As a consequence (see \cite[\S9.3, Prop.~9]{CoxLitShe97}),
  $\dim(I^{(h)})=\dim(\LM(I^{(h)}))=0$.

  Let $G^{(h)}$ be a minimal Gr\"obner basis of the homogenized ideal
  $I^{(h)}$ for the grevlex ordering. By definition of the degree of
  regularity, there
  exist polynomials $\ell_i$ and $\ell'_j$ such that
  $h^{\dreg(I^{(h)})}=\sum_{1\leq i\leq m} f_i^{(h)} \ell_i +\sum_{1\leq j\leq n}
  (x_j^2-x_j h)\ell'_j$. The ideal~$I^{(h)}$ being homogeneous, it is possible to find such a combination with $\deg(f_i^{(h)} \ell_i)\leq
  \dreg(I^{(h)}),\deg((x_j^2-x_j h) \ell'_j)\leq \dreg(I^{(h)})$ for all $ i,j$.
Evaluating this identity at~$h=1$ shows that $1$ belongs to 
the vector
  space generated by the rows of the boolean Macaulay matrix in degree
  $\dreg(I^{(h)})$.
\end{proof}

Next, the degree of regularity can be obtained from the classical Hilbert series.
\begin{defn}
Let $R^{(h)}$ be the ring $\mathbb{F}_2[x_1,\ldots, x_n,h]$, and let $R^{(h)}_d$ be the vector space of homogeneous polynomials of degree $d$. Also, for $I\subset R^{(h)}$ a homogeneous ideal, let $I_d$ denote the vector space defined by $I_d=R^{(h)}_d\cap I$.
The Hilbert function $\mathsf{HF}_I$ and the Hilbert series $\mathsf{HS}_I$ of $I$ are defined by
$$\mathsf{HF}_I(d)=\dim (R^{(h)}_d/I_d),\hspace{1.6cm}\mathsf{HS}_I(t)=\sum_{i=0}^\infty \mathsf{HF}_I(d) t^d.$$
\end{defn}
In view of the definition of the degree of regularity, 
if $I$ is a zero-dimensional ideal of $R^{(h)}$, then $\mathsf{HS}_I(t)$ is a polynomial of degree $\dreg(I)-1$.

The next step is to obtain information on the Hilbert series for a
large class of systems. To this end, we consider the so-called
\emph{syzygy module}, which describes the algebraic relations between
the polynomials of a system.

\begin{defn}
  Let $(g_1,\ldots,g_\ell)\in (R^{(h)})^\ell$ be a polynomial
  system. A \emph{syzygy} of $(g_1,\ldots,g_\ell)$ is a $\ell$-tuple
  $(s_1,\ldots, s_\ell)\in (R^{(h)})^\ell$ such that $\sum_{i=1}^\ell
  s_i g_i =0.$ The set of all syzygies of $(g_1,\ldots,g_\ell)$ is a
  submodule of $(R^{(h)})^\ell$. The degree of a syzygy $\mathbf
  s=(s_1,\ldots,s_\ell)$ is defined as $\deg(\mathbf s)=\max_{1\leq i\leq \ell} \deg(g_i s_i)$.
\end{defn}

Obviously, for any such polynomial system, commutativity induces syzygies of the type
\begin{equation}\label{syz-1}
g_i g_j-g_j g_i=0.
\end{equation}

Moreover, for any constant~$a\in\mathbb F_2$ we have the relation $a^2=a$, thus expanding the square of a  polynomial~$\sum_{\alpha\in\mathbb N^k} a_\alpha \mathbf{x}^\alpha\in \mathbb F_2[x_1,\ldots, x_k]$ gives
$\sum_{\alpha\in\mathbb N^k} a_\alpha \mathbf{x}^{2\alpha}$. As a consequence, for a homogeneous quadratic polynomial $f_i^{(h)}=\sum_{1\leq j,k\leq n} a_{j,k}\,x_j x_k+\sum_{1\leq j \leq n} b_j\,x_j h + c\, h^2\in \mathbb F_2[x_1,\ldots, x_n,h]$, we obtain the following syzygy of the system $(f_i^{(h)},x_1^2-x_1h,\dots,x_n^2-x_nh)$:
\begin{equation}\label{syz-2}
(f_i^{(h)}-h^2)f_i^{(h)}+\sum_{1\leq j,k\leq n} a_{j,k} \left(x_k^2(x_j^2-x_j h) +  x_j h(x_k^2-x_k h)\right)+ \sum_{1\leq j \leq n}  b_j h^2(x_j^2-x_j h) = 0.
\end{equation}

\begin{defn}Let $\mathbf{F}^{(h)}=(f_1^{(h)},\dots,f_n^{(h)},x_1^2-x_1h,\dots,x_n^2-x_nh)$ be a system of homogeneous quadratic polynomials over~$\mathbb{F}_2$. We call \emph{trivial syzygies} of $\mathbf{F}^{(h)}$ and note $Syz_{\text{triv}}$ the module generated by the syzygies of types~(\ref{syz-1}) and~(\ref{syz-2}).
\end{defn}

\begin{defn}\label{def:bool semi reg}
  A boolean homogeneous system $(f_1^{(h)},\ldots,f_m^{(h)})$ is called 
\begin{itemize}
\item \emph{boolean semi-regular} in degree $D$ if any syzygy whose degree is less than $D$ belongs to $Syz_{\text{triv}}$;
\item \emph{boolean semi-regular} if it is boolean semi-regular in degree $\dreg(\langle f_1^{(h)},\ldots,f_m^{(h)}, x_1^2- x_1 h,\ldots, x_n^2-x_n h\rangle)$.
\end{itemize}
\end{defn}
\noindent(This notion is slightly different from the \emph{semi-regularity over $\mathbb{F}_2$} defined in \citep{BarFauSal04,BarFauSal05}.)

\smallskip

In the sequel we use the following notations: if $S\in \mathbb Z[[t]]$ is a power series, then $[S]$ denotes the series obtained by truncating $S$ just before the index of its first nonpositive coefficient. Also, $[t^d]S(t)$ denotes the coefficient of $t^d$ in $S$.

\begin{prop}\label{prop:HSboolSemireg}
Let $(f_1^{(h)},\ldots, f_m^{(h)})$ be a boolean homogeneous system. Let $D_0$ denote the degree of regularity of the system $(f_1^{(h)},\ldots, f_m^{(h)}, x_1^2-x_1 h,\ldots, x_n^2-x_n h)$.
If the systems $(f_1^{(h)},\ldots, f_{i-1}^{(h)},f_i^{(h)}-h^2)$ and $(f_1^{(h)},\ldots, f_{i-1}^{(h)},f_i^{(h)})$ are  $D_0-2$ (resp.  $D_0$)-boolean semi-regular for each $i\in \{2,\ldots, m\}$,
then the Hilbert series of the homogeneous ideal $\langle f_1^{(h)},\ldots, f_m^{(h)},x_1^2-x_1 h,\ldots, x_n^2-x_n h\rangle$ is
$$\mathsf{HS}_{n,m}(t):=\left[\frac{(1+t)^{n}}{(1-t)(1+t^2)^{m}}\right].$$
\end{prop}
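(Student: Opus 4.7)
My plan is to compute $\mathsf{HS}_{n,m}$ inductively, adjoining the $f_i^{(h)}$ one at a time to the quotient by the field equations. Define
$$A_{n,i}:=R^{(h)}/\langle f_1^{(h)},\ldots,f_i^{(h)}, x_1^2-x_1h,\ldots,x_n^2-x_nh\rangle,$$
so that $A_{n,m}$ is the ring whose Hilbert series is sought. A direct monomial-basis count (every monomial of $R^{(h)}$ reduces modulo the field equations to one of the form $x_1^{\varepsilon_1}\cdots x_n^{\varepsilon_n}h^b$ with $\varepsilon_j\in\{0,1\}$) yields the base case $\mathsf{HS}_{A_{n,0}}(t)=(1+t)^n/(1-t)$. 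The goal is then to show that each induction step divides the Hilbert series by $(1+t^2)$, up to degree~$D_0$.

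The key algebraic fact is the identity $f_i^{(h)}(f_i^{(h)}-h^2)=0$ in $A_{n,i-1}$, read off directly from the trivial syzygy~(\ref{syz-2}). Because the characteristic is~$2$, substituting $f_i^{(h)}-h^2$ for $f_i^{(h)}$ in the data of~(\ref{syz-2}) produces the same relation (the constant term $c$ becomes $c+1$ but the correction $2h^2$ vanishes), so the analogous identity is also a trivial syzygy of the shifted system appearing in the second hypothesis. Combined with the two semi-regularity hypotheses, this yields the kernel identifications
\begin{align*}
\ker\bigl(\times f_i^{(h)}\colon A_{n,i-1}[-2]\to A_{n,i-1}\bigr) &= (f_i^{(h)}-h^2)\cdot A_{n,i-1}[-2],\\
\ker\bigl(\times(f_i^{(h)}-h^2)\colon A_{n,i-1}[-2]\to A_{n,i-1}\bigr) &= f_i^{(h)}\cdot A_{n,i-1}[-2],
\end{align*}
valid in degree at most $D_0$ and $D_0-2$ respectively. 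The point is that any element of such a kernel is the nontrivial component of a syzygy of the augmented system, which by the semi-regularity hypothesis must lie in $Syz_{\text{triv}}$; unwinding the definition of $Syz_{\text{triv}}$ shows that the only contribution in the position of $f_i^{(h)}$ (resp.\ $f_i^{(h)}-h^2$) is a multiple of the syzygy above.

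Iterating these two identifications assembles an exact complex of $A_{n,i-1}$-modules
$$\cdots\xrightarrow{\times(f_i^{(h)}-h^2)}A_{n,i-1}[-6]\xrightarrow{\times f_i^{(h)}}A_{n,i-1}[-4]\xrightarrow{\times(f_i^{(h)}-h^2)}A_{n,i-1}[-2]\xrightarrow{\times f_i^{(h)}}A_{n,i-1}\to A_{n,i}\to 0,$$
exact in every degree $\le D_0$ (the compositions vanish by the key identity). Taking the alternating sum of Hilbert series in each such degree gives $\mathsf{HS}_{A_{n,i}}(t)=\mathsf{HS}_{A_{n,i-1}}(t)\cdot(1-t^2+t^4-\cdots)=\mathsf{HS}_{A_{n,i-1}}(t)/(1+t^2)$ modulo $t^{D_0+1}$, and induction on $i$ yields $\mathsf{HS}_{A_{n,m}}(t)\equiv (1+t)^n/((1-t)(1+t^2)^m)$ to that order. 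For degrees $\ge D_0$ the true Hilbert function vanishes by definition of the degree of regularity, and $D_0$ is precisely the first index at which the power series expansion of $(1+t)^n/((1-t)(1+t^2)^m)$ becomes non-positive, so the truncation operator $[\,\cdot\,]$ recovers $\mathsf{HS}_{n,m}$ exactly.

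The hard part will be the careful verification of the two kernel identifications: this requires unpacking precisely which elements of $Syz_{\text{triv}}$ contribute to the component of $f_i^{(h)}$ (and of $f_i^{(h)}-h^2$ after substitution), and keeping track of the slightly different degree thresholds from the two semi-regularity hypotheses consistently as one iterates the complex.
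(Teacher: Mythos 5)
Your plan is in substance the paper's own proof: your two kernel identifications are exactly the injectivity statements of the two short exact sequences
$0\to R^{(h)}_{d-2}/(S_{i-1}+\langle f_i^{(h)}-h^2\rangle)_{d-2}\xrightarrow{\times f_i^{(h)}}R^{(h)}_d/(S_{i-1})_d\to R^{(h)}_d/(S_i)_d\to 0$ (and its twin with the roles of $f_i^{(h)}$ and $f_i^{(h)}-h^2$ exchanged) that the paper derives from semi-regularity, and your observation that the square relation for $f_i^{(h)}-h^2$ is again of type~(\ref{syz-2}) because $2h^2=0$ is the same remark the paper makes implicitly. The only presentational difference is bookkeeping: the paper tracks the two Hilbert functions $\mathsf{HF}_{S_i}$ and $\mathsf{HF}_{S_i'}$ through a coupled recurrence matched against~(\ref{eq:genser}), whereas you splice the two sequences into a periodic resolution of $A_{n,i}$ over $A_{n,i-1}$ and take an Euler characteristic, which packages the factor $1/(1+t^2)$ slightly more transparently. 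Your degree thresholds do propagate consistently through the iteration (exactness at the $j$-th spot in ambient degree $d$ needs only $d<D_0$ for both hypotheses), though note the off-by-one: exactness holds in degrees strictly below $D_0$, so the congruence you obtain is modulo $t^{D_0}$, not $t^{D_0+1}$.

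There is, however, one genuine gap at the very end. You assert that ``$D_0$ is precisely the first index at which the power series expansion of $(1+t)^{n}/((1-t)(1+t^2)^{m})$ becomes non-positive.'' Half of this is free: for $d<D_0$ the induction gives $[t^d]$ of the series equal to $\mathsf{HF}_{S_m}(d)\ge 1$, so the first nonpositive coefficient occurs at index at least $D_0$. But the other half --- that the coefficient of $t^{D_0}$ is actually nonpositive --- does not follow from anything you have established, and without it the truncation $[\,\cdot\,]$ could retain spurious positive coefficients in degrees $D_0,D_0+1,\dots$ where the true Hilbert function is already zero, making the stated formula false. The paper closes this by proving, by a separate induction on $i$ using the recurrences~(\ref{eq:hilb1}) and~(\ref{eq:genser}), the inequality $[t^{D_0}]\,(1+t)^{n}/((1-t)(1+t^2)^{i})\le\mathsf{HF}_{S_i}(D_0)$, whose case $i=m$ gives nonpositivity since $\mathsf{HF}_{S_m}(D_0)=0$. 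You need to add this step (or an equivalent one-sided estimate surviving to degree $D_0$, where your exactness breaks down) to complete the argument.
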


\begin{proof}
Let $S_i$ (resp. $S_i'$) denote the system $(f_1^{(h)},\ldots,f_i^{(h)}, x_1^2-x_1 h,\ldots, x_n^2- x_n h)$ (resp. $(f_1^{(h)},\ldots,f_i^{(h)}-h^2, x_1^2-x_1 h,\ldots, x_n^2- x_n h)$).  
The general framework of this proof is rather classical: we prove by induction on $i$ and $d$ that for all $i\le m$ and $d<D_0$, $\mathsf{HF}_{\langle S_i\rangle}(d)=\mathsf{HF}_{\langle S_i'\rangle}(d)=[t^d]\frac{(1+t)^{n}}{(1-t)(1+t^2)^{i}}$. 

First, notice that a basis of the $\mathbb F_2$-vector space $R/\langle x_1^2-x_1 h,\ldots, x_n^2-x_n h\rangle$ is the set of monomials $\mathfrak S=\{x_1^{\delta_1}\dotsm x_n^{\delta_n}h^\ell \mid \delta_1,\ldots, \delta_n\in \{0,1\}, \ell\in \mathbb N\}$. The generating function of this set is 
$$\sum_{\mathfrak m\in \mathfrak S} t^{\deg(\mathfrak m)} = \frac{(1+t)^{n}}{(1-t)}.$$
Therefore, the initialization of the recurrence comes from the relations 
$$\begin{cases}\mathsf{HF}_{\langle x_1^2-x_1 h,\ldots, x_n^2-x_n h\rangle}(d)=[t^d]\frac{(1+t)^{n}}{(1-t)}\text{ for all }d\in \mathbb N;\\\mathsf{HF}_{\langle S_i\rangle}(0)=\mathsf{HF}_{\langle S_i'\rangle}(0)=1\text{ and }\mathsf{HF}_{\langle S_i\rangle}(1)=\mathsf{HF}_{\langle S_i'\rangle}(1)=n+1\text{ for all }i\le m.\end{cases}$$

In the following, $2\le d<D_0$ and $1\le i\le m$ are two integers, and
we assume by induction that for all $(\ell, j)\in \mathbb N^2$ such
that 
$\ell < d$ or ($\ell=d$ and $j<i$), we have
$$\mathsf{HF}_{\langle S_j\rangle}(\ell)=\mathsf{HF}_{\langle S_j'\rangle}(\ell)=[t^\ell]\frac{(1+t)^{n}}{(1-t)(1+t^2)^j}.$$

Consider the following sequences
  \small$$\begin{array}{clr}0\rightarrow  R^{(h)}_{d-2}/(
  S_{i-1}+\langle 
  f_i^{(h)}-h^2\rangle)_{d-2}\xrightarrow{\times 
    f_i^{(h)}}& R^{(h)}_d/( S_{i-1})_d\rightarrow
   R^{(h)}_d/( S_i)_d\rightarrow 0\\
0\rightarrow  R^{(h)}_{d-2}/(
  S_{i-1}+\langle 
  f_i^{(h)}\rangle)_{d-2}\xrightarrow{\times 
    (f_i^{(h)}-h^2)}& R^{(h)}_{d}/( S_{i-1})_{d}\rightarrow
   R^{(h)}_{d}/( S_i')_{d}\rightarrow 0,& 
\end{array}
$$ \normalsize
where the last arrow of each sequence is the canonical projection.
Let $g$ be in the kernel of the application
$$ R^{(h)}_{d-2}/(
S_{i-1}+\langle f_i^{(h)}-h^2\rangle)_{d-2}\xrightarrow{\times
  f_i^{(h)}} R^{(h)}_d/( S_{i-1})_d.$$ Then $g f_i^{(h)}$ belongs to
$(S_{i-1})_d$, which implies that there exist polynomials $g_1,\dots, g_{i-1},$ $h_1,\dots, h_n$ such
that $(g_1,\ldots, g_{i-1},g,h_1,\ldots, h_n)$ is a syzygy of degree $d$ of the system
$S_i$. By the boolean semi-regularity assumption, this syzygy belongs
to $Syz_{\text{triv}}$, and hence $g\in \langle S_{i-1}\rangle+\langle
f_i^{(h)}-h^2\rangle$. Therefore the application $\times f_i^{(h)}$ is
injective and the first sequence is exact. One can prove similarly that the second sequence
is also exact.

  These exact sequences yield relations between the Hilbert functions:
\begin{align}\label{eq:hilb2}
\mathsf{HF}_{ S_{i}'}(d-2)-\mathsf{HF}_{ S_{i-1}}(d)+\mathsf{HF}_{ S_i}(d)&=0,\\
\label{eq:hilb1}
\mathsf{HF}_{ S_{i}}(d-2)-\mathsf{HF}_{ S_{i-1}}(d)+\mathsf{HF}_{ S_i'}(d)&=0.
\end{align}
Moreover, we have the relation
\begin{equation}
  \label{eq:genser}
 [t^{\ell}]\frac{(1+t)^{n}}{(1-t)(1+t^2)^j} = [t^{\ell}]\frac{(1+t)^{n}}{(1-t)(1+t^2)^{j-1}} - [t^{\ell-2}]\frac{(1+t)^{n}}{(1-t)(1+t^2)^j}.
\end{equation}
Using Relations~\eqref{eq:hilb2} and \eqref{eq:hilb1}, and the induction hypothesis, we get the
desired result.

\smallskip

The proof is completed by showing that $D_0$ is equal to the index of the first
nonpositive coefficient of $\mathsf{HF}_{S_m}(t)$. First, by definition of
the degree of regularity, the coefficients $[t^d]\mathsf{HF}_{S_m}(t)$
are zero for $d\ge D_0$. Next, that the coefficient
$[t^{D_0}]\frac{(1+t)^{n}}{(1-t)(1+t^2)^{m}}$ is nonpositive
follows from the following property (easily proved by
induction on $i$, $0\le i \le m$ using~(\ref{eq:hilb1}--\ref{eq:genser})):
$$ [t^{D_0}] \frac{(1+t)^{n}}{(1-t)(1+t^2)^{i}} \le \mathsf{HF}_{ S_i}(D_0). $$
\end{proof}

Putting everything together, we have obtained the following.
\begin{cor}With the same notation as in Proposition~\ref{prop:dlindregh}, if the homogenized system 
  verifies the conditions of Proposition \ref{prop:HSboolSemireg}, then the witness degree of the system~$$(f_1,\dots,f_m,x_1^2-x_1,\dots,x_n^2-x_n)$$ is bounded by the degree of the polynomial~$\mathsf{HS}_{n,m}(t)$. 
\end{cor}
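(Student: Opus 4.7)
The proof is a direct chaining of the two previous propositions, so the plan is to spell out exactly how they combine and to verify that their hypotheses line up. First I would invoke Proposition~\ref{prop:dlindregh}: since the notation is carried over (in particular the system $\mathbf{F} = 0$ is inconsistent), it gives that the homogenized ideal
\[
I^{(h)} = \langle f_1^{(h)}, \ldots, f_m^{(h)}, x_1^2 - x_1 h, \ldots, x_n^2 - x_n h \rangle
\]
is zero-dimensional and satisfies $\dwit(\mathbf{F}) \le \dreg(I^{(h)})$. This reduces the task to bounding $\dreg(I^{(h)})$.

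Next I would apply Proposition~\ref{prop:HSboolSemireg}, which, under the assumed boolean semi-regularity of the relevant subsystems, identifies the Hilbert series of $I^{(h)}$ with $\mathsf{HS}_{n,m}(t) = [(1+t)^n/((1-t)(1+t^2)^m)]$. The zero-dimensionality obtained in the previous step is what ensures this series is actually a polynomial (and that $\dreg(I^{(h)})$ is finite in the first place), so the two propositions dovetail without extra assumptions.

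Finally I would invoke the standard characterization, recalled right after the definition of $\mathsf{HS}_I$ in the excerpt, that for a zero-dimensional homogeneous ideal $I$ one has $\dreg(I) = \deg(\mathsf{HS}_I) + 1$, i.e., the degree of regularity is exactly the first index where the Hilbert function vanishes. Combining the three facts yields
\[
\dwit(\mathbf{F}) \;\le\; \dreg(I^{(h)}) \;=\; \deg(\mathsf{HS}_{n,m}(t)) + 1,
\]
which matches the bound $d_0$ used in Algorithm~\ref{algo:booleansolve} (defined as the first nonpositive coefficient index of the underlying rational function) and gives the stated corollary. There is no real obstacle: the work is entirely bookkeeping to check that the inconsistency hypothesis needed by Proposition~\ref{prop:dlindregh} and the semi-regularity hypothesis needed by Proposition~\ref{prop:HSboolSemireg} coexist, and that the degree-of-regularity / Hilbert-series dictionary is applied to the correct ideal $I^{(h)}$ rather than to the affine ideal $I$.
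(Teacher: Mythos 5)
Your proof is exactly the paper's intended argument: the paper offers nothing beyond ``putting everything together,'' which is precisely your chain of Proposition~\ref{prop:dlindregh} (giving $\dwit(\mathbf F)\le\dreg(I^{(h)})$), Proposition~\ref{prop:HSboolSemireg} (identifying $\mathsf{HS}_{I^{(h)}}$ with $\mathsf{HS}_{n,m}$), and the remark that $\mathsf{HS}_I$ is a polynomial of degree $\dreg(I)-1$ for a zero-dimensional homogeneous ideal. The only caveat is that this chain yields $\dwit\le\deg(\mathsf{HS}_{n,m})+1$ --- i.e., the value $d_0$ actually used in the algorithm --- whereas the corollary as literally worded omits the $+1$; that off-by-one sits in the paper's phrasing rather than in your argument, and you correctly reconcile your bound with $d_0$.
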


At this stage, it might seem that choosing the degree of~$\mathsf{HS}_{n-k,m}$ for~$d_0$ in Algorithm {\sf BooleanSolve} amounts to making a very strong assumption on the nature of the systems obtained by specialization followed by homogenization. In Section \ref{sec:expmodel}, we discuss experiments showing that this assumption is actually quite reasonable.

\medskip Finally, in order to compute the asymptotic behavior of our complexity estimates in the next section, we need the following.
\begin{prop}\label{prop:dregh}
Let $\alpha\geq 1$ be a real number. 
Then, as $n\rightarrow\infty$,
\[\begin{array}{rl}&\deg\left( \mathsf{HS}_{n,\lceil\alpha n\rceil}(t)\right)\sim M(\alpha)n,\\
\text{with}\quad &\displaystyle M(x):=-x+\frac{1}{2}+\frac{1}{2}\sqrt{2 x^2-10 x-1+ 2(x+2)\sqrt{x(x+2)}}.\end{array}\]
\end{prop}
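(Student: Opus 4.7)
The plan is to perform a saddle-point analysis on the generating series
\[G_n(t) := \frac{(1+t)^n}{(1-t)(1+t^2)^{\lceil\alpha n\rceil}},\]
in the spirit of~\cite{BarFauSal04,BarFauSal05,Bar04}. Since $\deg\mathsf{HS}_{n,\lceil\alpha n\rceil}(t)$ is one less than the index of the first nonpositive coefficient of $G_n(t)$, the goal is to locate the threshold $c^*$ such that, for $d=cn$, $[t^d]G_n(t)$ transitions from being positive and exponentially large in $n$ to being sign-changing as $c$ crosses $c^*$; one will then have $M(\alpha)=c^*$.

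First I would apply Cauchy's formula on a circle $|t|=r\in(0,1)$ to obtain
\[[t^d]G_n(t) = \frac{1}{2\pi i}\oint \frac{\exp(n\Phi_c(t))}{(1-t)\,t}\,dt, \qquad \Phi_c(t) := \log(1+t)-\alpha\log(1+t^2)-c\log t,\]
with $c := d/n$. The non-exponential factor $1/((1-t)t)$ remains bounded away from $0$ and $\infty$ on suitable contours and does not affect the exponential order of the saddle-point contribution. The saddles solve $t\Phi_c'(t)=0$, equivalently $h(t)=c$ with $h(t) := t/(1+t)-2\alpha t^2/(1+t^2)$.

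Then I would study $h$ on $(0,\infty)$: it is smooth with $h(0)=0$, $h'(0)=1>0$, and $h(+\infty)=1-2\alpha\le -1$ for $\alpha\ge 1$, so it attains a unique positive maximum $c^* := h(r^*)$ at some $r^*>0$. For $c<c^*$ the equation $h(t)=c$ has two positive real roots; the smaller is the dominant saddle and the usual saddle estimate gives $[t^{cn}]G_n(t)\sim A(c)\,n^{-1/2}\exp(n\psi(c))$ with $A(c)>0$, so the coefficient is strictly positive. For $c>c^*$ the two real saddles coalesce and split into a complex-conjugate pair; the dominant contribution now oscillates in sign. A uniform control of the subdominant corrections near $c=c^*$ then implies that the first nonpositive coefficient of $G_n(t)$ occurs at an index $d_n$ with $d_n/n\to c^*$, yielding $\deg\mathsf{HS}_{n,\lceil\alpha n\rceil}(t)\sim c^* n$.

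It remains to express $c^*$ in closed form. The condition $h'(r^*)=0$ simplifies to $(1+r^2)^2=4\alpha r(1+r)^2$, i.e., to the palindromic quartic
\[r^4-4\alpha r^3+(2-8\alpha)r^2-4\alpha r+1=0.\]
Setting $v := r+1/r$ reduces this to $v^2-4\alpha v-8\alpha=0$, whose positive root is $v^*=2\alpha+2\sqrt{\alpha(\alpha+2)}$; the relevant root of $r^2-v^*r+1=0$ is $r^*=(v^*-\sqrt{(v^*)^2-4})/2$. Using the identity $1+(r^*)^2=v^*r^*$, the expression $c^*=r^*/(1+r^*)-2\alpha r^*/v^*$ can be rearranged (cleanest by isolating $c^*+\alpha-1/2$, squaring, and substituting $(v^*)^2=4\alpha v^*+8\alpha$ to eliminate $v^*$) into the announced formula for $M(\alpha)$. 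The main obstacle I anticipate is the rigorous saddle-point argument across the transition $c=c^*$: one has to verify that the real-saddle asymptotic remains valid with positive sign up to $c^*$ and that the complex-saddle contribution past $c^*$ produces a sign change within $o(n)$ of $c^*n$. The algebraic simplification, though somewhat lengthy, is mechanical.
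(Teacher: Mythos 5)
Your proposal is correct and follows essentially the same route as the paper: the same Cauchy-integral representation, the same saddle-point equation $\phi(z)=d/n$ with $\phi(z)=z/(1+z)-2\alpha z^2/(1+z^2)$, the same criterion that the sign change occurs where two real saddles coalesce into a double saddle (the paper delegates the rigorous justification of this transition to \cite{BarFauSal04}, exactly the step you flag as the main obstacle), and the same elimination of the saddle location to reach the closed form. Your explicit reduction of $h'(r)=0$ to the palindromic quartic and the substitution $v=r+1/r$ is just a concrete, and correct, way of carrying out what the paper calls ``a resultant computation.''
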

\begin{proof}
We follow the approach of \cite{BarFauSal04,BarFauSal05}.
We start from a representation of the coefficient as a Cauchy integral:
$$[t^d]\frac{(1+t)^{n}}{(1-t)(1+t^2)^{m}}=\frac{1}{2\pi\imath}\oint \frac{(1+z)^{n}}{(1-z)(1+z^2)^{\lceil \alpha n\rceil}}\frac{1}{z^{d+1}}dz,$$
where the contour is a circle centered in $0$ whose radius is smaller than $1$. 
We are searching for a value of $d$ where this integral vanishes, for large $n$. 
We first estimate the asymptotic behaviour of the integral for fixed $d$. The integrand has the form
$\exp(n f(z))$ with
$$f(z)=\log(1+z)-\frac{\lceil \alpha n\rceil}{n}\log(1+z^2)-\frac{\log(1-z)+(d+1)\log(z)}{n}.$$
As $n$ increases, the integral concentrates in the neighborhood of one or several saddle points, solutions to
the saddle-point equation $zf'=0$, which rewrites
\begin{equation}\label{eq:sp}
\frac{d}{n}=\frac{z}{1+z}-\frac{2 \frac{\lceil \alpha n\rceil}{n} z^2}{1+z^2}-\frac{1-2z}{n(1-z)}=:\phi(z)+O(1/n).
\end{equation}
In \cite{BarFauSal04}, it is shown that for the contributions of saddle points to cancel out, two of them must coalesce and give rise to a double saddle point, given by the smallest positive double real root of the saddle-point equation, which is therefore such that $(z f')'=0$.  When $n$ grows, the
solutions of this equation tend towards the roots of $\phi'(z)=0$.
Let $z_0$ be the smallest positive real root of this equation. The saddle-point equation~\eqref{eq:sp} then gives $d\sim \phi(z_0) n$.
Finally, eliminating~$z_0$ using $\phi'(z_0)=0$ by a resultant computation yields
 $$d\sim \left(-\alpha+\frac{1}{2}+\frac{1}{2}\sqrt{2\alpha^2-10\alpha-1+2(\alpha+2)\sqrt{\alpha(\alpha+2)}}\right)n.$$
\end{proof}

\begin{figure}[h]
\centerline{\includegraphics[width=.5\textwidth]{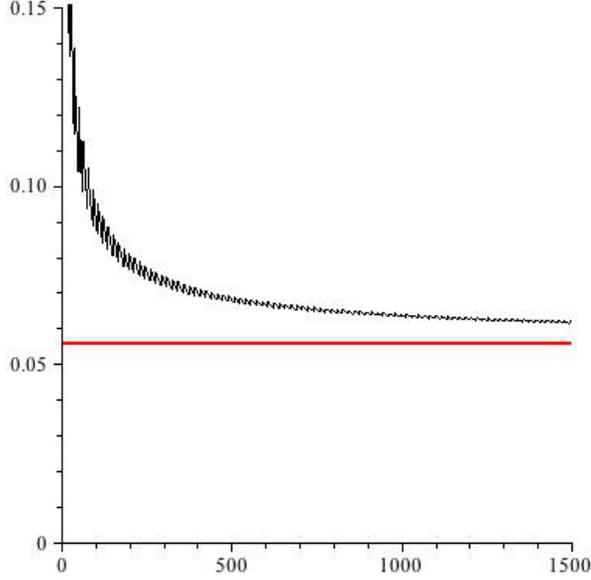}}
\caption{Comparison of \(\deg(\mathsf{HS}_{n,\lceil{n/.55}\rceil})/n\) (black) with its limit (red).
\label{fig:asympt}}
\end{figure}

Figure \ref{fig:asympt} shows the actual values of
\(\deg(\mathsf{HS}_{n,\lceil \alpha n\rceil})/n\) for
$\alpha=1/.55$. Notice that this sequence converges rather slowly. This is due to
the fact that we only take into account the first term in the
asymptotic expansion of \(\deg(\mathsf{HS}_{n,\lceil \alpha n\rceil})\). It would be possible to obtain the full
asymptotic expansion using techniques similar to those in
\cite{BarFauSal04,BarFauSal05}. However, this would not change the asymptotic complexity of Algorithm~\ref{algo:booleansolve}.

\subsection{Complexity}
We now estimate the complexity of Algorithm~{\sf BooleanSolve} by
going through its steps and making all necessary hypotheses
explicit. We consider the case when the number of variables $n$ and
the number of polynomials $m$ are related by~$m\sim\alpha n$ for some
$\alpha\ge1$ and $n$ is large. Also we assume that the ratio $k/n$ is
controlled by a parameter $\gamma\in [0,1]$, i.e., $k=
(1-\gamma)n$.

The first step (lines 4 to 6 in the algorithm) is to evaluate the polynomials~$\tilde{f}_i$ from the polynomials~$f_i$. With no arithmetic operations, the polynomials~$f_i$ can first be written as polynomials in~$(x_1,\dots,x_{n-k})$ with coefficients that are polynomials of degree at most~2 in~$x_{n-k+1},\dots,x_n$ and at most~1 in each variable. Each such coefficient has at most $1+k+\binom{k}{2}$ monomials, each of which can be evaluated with at most one arithmetic operation. 
The total number of these polynomial coefficients is at most~$m(1+n-k+\binom{n-k}{2})$. Thus the total cost of all the evaluations of the coefficients of the polynomials~$\tilde{f}_i$ is at most~$O(n^52^{(1-\gamma) n})$. This turns out to be asymptotically negligible compared to the next steps.

The next stage (line~8) of our algorithm consists in performing tests of inconsistency of the Macaulay matrices.
\begin{prop}\label{prop:complFirstStep}For any $\epsilon>0$, $\alpha\ge 1$ and
  sufficiently large $m=\lceil\alpha n\rceil$, the complexity of all tests of
  consistency of Macaulay matrices in Algorithm
  \textsf{BooleanSolve} with parameters~$(m,n,k)$ is
        \begin{itemize}
                \item $O(2^{(1-\gamma+\theta F_{\alpha}(\gamma)+\epsilon)n})$ in the deterministic variant; 
                \item of expectation $O(2^{(1-\gamma+2 F_{\alpha}(\gamma)+\epsilon)n})$ in the probabilistic variant,
        \end{itemize}
where $\gamma=1-k/n$, $F_{\alpha}(\gamma)=-\gamma\log_2(D^D(1-D)^{1-D})$ with $D=M(\alpha/\gamma)$, the function $M$ as in Proposition~\ref{prop:dregh} and $\theta$ the complexity of linear algebra as in Proposition~\ref{prop:row_ech}.
\end{prop}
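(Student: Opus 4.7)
The plan is to chain together three previously established results: the asymptotics of the witness degree (Proposition~\ref{prop:dregh}), the combinatorial bounds on Macaulay matrices (Proposition~\ref{prop:sizemac}), and the cost of the two linear algebra subroutines (Propositions~\ref{prop:row_ech} and~\ref{thm:wiedemann}). The outer structure is immediate: the algorithm runs the consistency test at most $2^{k}=2^{(1-\gamma)n}$ times, once for each specialization of $(a_{n-k+1},\dots,a_n)$, so the global bounds will simply be $2^{(1-\gamma)n}$ times a per-iteration bound.

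First, I would fix an iteration and analyze the size of the associated Macaulay matrix. The specialized system $(\tilde f_1,\dots,\tilde f_m)$ lives in $n'=\gamma n$ variables with $m\sim\alpha n=(\alpha/\gamma)n'$ equations. By the definition of $d_0$ in line~2 of the algorithm together with Proposition~\ref{prop:dregh} applied to $n'$ variables and ratio $\alpha/\gamma$, one has $d_0/n' \to M(\alpha/\gamma)=D$ as $n\to\infty$. Hence for any $\delta>0$ and $n$ large enough, $d_0\le(D+\delta)\gamma n$, and in particular $d_0<n'/2$ (since $M<1/2$ for $\alpha\ge 1$, which is easily checked from the explicit formula), so Proposition~\ref{prop:sizemac} applies. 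Substituting $x=d_0/n'$ there shows that $r_{\mathsf{Mac}}$, $c_{\mathsf{Mac}}$, and $s_{\mathsf{Mac}}$ are all bounded, up to polynomial factors in $n$, by $\binom{\gamma n}{d_0}$.

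Next I would invoke the classical entropy estimate
\[
\binom{\gamma n}{D\gamma n}=2^{\gamma n\,H(D)+O(\log n)},\qquad H(D)=-D\log_2 D-(1-D)\log_2(1-D),
\]
so that $\gamma H(D)=F_{\alpha}(\gamma)$, and conclude that each of $r_{\mathsf{Mac}}, c_{\mathsf{Mac}}, s_{\mathsf{Mac}}$ is $O(2^{(F_{\alpha}(\gamma)+\epsilon/2)n})$ for any fixed $\epsilon>0$ and $n$ large. The polynomial prefactors from Proposition~\ref{prop:sizemac} and the continuity of $M$ (needed to replace $d_0$ by its asymptotic value without losing more than $\epsilon/2$ in the exponent) are absorbed into the $2^{\epsilon n/2}$ slack; this is the main technical nuisance but not a real obstacle.

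Finally, I would plug these size estimates into the two linear algebra bounds. In the deterministic case Proposition~\ref{prop:row_ech} gives a per-iteration cost of $O(r_{\mathsf{Mac}} c_{\mathsf{Mac}} \min(r_{\mathsf{Mac}},c_{\mathsf{Mac}})^{\theta-2})=O(2^{(\theta F_{\alpha}(\gamma)+\epsilon/2)n})$; multiplying by the $2^{(1-\gamma)n}$ outer iterations yields the claimed $O(2^{(1-\gamma+\theta F_{\alpha}(\gamma)+\epsilon)n})$. In the probabilistic case Proposition~\ref{thm:wiedemann}, applied after padding with zeros to make the matrix square of size $N=\max(r_{\mathsf{Mac}},c_{\mathsf{Mac}})$, gives expected cost $O(N\log N\cdot s_{\mathsf{Mac}}+N^2\log^2 N\log\log N)=O(2^{(2F_{\alpha}(\gamma)+\epsilon/2)n})$ per iteration, and again multiplying by $2^{(1-\gamma)n}$ produces the stated expectation. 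The hardest bookkeeping step is the first one, namely converting the asymptotic equivalence of Proposition~\ref{prop:dregh} into a uniform $(D+\delta)\gamma n$ upper bound on $d_0$ that is compatible with applying the binomial inequalities of Proposition~\ref{prop:sizemac}; everything else is a direct substitution.
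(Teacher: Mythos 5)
Your proposal is correct and follows essentially the same route as the paper's own proof: bound $d_0\sim D\gamma n$ via Proposition~\ref{prop:dregh}, check $d_0<\gamma n/2$ so that Proposition~\ref{prop:sizemac} applies, reduce all matrix dimensions to $\binom{\gamma n}{d_0}$ up to polynomial factors, convert via the Stirling/entropy estimate into $2^{F_\alpha(\gamma)n}$, plug into the two linear algebra costs, and multiply by the $2^{(1-\gamma)n}$ outer iterations. The paper's argument is just a more compressed version of the same bookkeeping (it notes $D\le M(1)<1/2$ via monotonicity of $M$ and that $r_{\mathsf{Mac}}$ eventually dominates $c_{\mathsf{Mac}}$, exactly the points you also address).
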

A notable feature of this result is that in terms of complexity, the probabilistic variant of our algorithm behaves as the deterministic one where the linear algebra would be performed in quadratic complexity (i.e., with $\theta=2$).
\begin{proof}
We first estimate the size of the Macaulay matrices. 
  By Proposition~\ref{prop:dregh}, the index~$d_0$, which
  is $1+\deg(\mathsf{HS}_{n-k,m})$ behaves
  asymptotically like $\gamma Dn$. The function $M(x)$ is decreasing
  for $x\ge 1$, so that $D\le M(1)<1/2$. Thus, $d_0<\gamma n/2$ for
  $n$ sufficiently large and Proposition~\ref{prop:sizemac} applies
  with~$d=d_0$, $m=\lceil\alpha n\rceil$ equations and $n-k=\gamma n$
  variables. For~$n$ sufficiently large, the bound
  for~$r_{\mathsf{Mac}}$ is larger than that
  for~$c_{\mathsf{Mac}}$, since 
  the quotient of these two bounds is
  ${m}/{(\frac {\gamma n} {d_0} - 1)^2}$, which grows linearly with~$n$.

Next, we turn to the tests of inconsistency. The previous bounds and Proposition~\ref{prop:row_ech} imply that
  the number of operations required for the computation of the row
  echelon form is
  $O(n\binom{\gamma n}{d_0}^{\theta})$. Similarly, by Proposition
  \ref{thm:wiedemann}, the complexity of checking the consistency of
  each matrix by the probabilistic method is
 $O(r_{\mathsf{Mac}}\log(r_{\mathsf{Mac}})s_{\mathsf{Mac}})=O(n^4\binom{\gamma n}{d_0}^2\log{\binom{\gamma n}{d_0}})$
  and that bound dominates the cost of the additional operations
  in~$\mathbb{F}_2$. Now, Stirling's formula implies that for
  any~$0<b<a$, $\log\binom{an}{bn}\sim
  n\log(a^a/(b^b(a-b)^{a-b}))$. Setting~$a=\gamma$ and $b=\gamma D$
  gives the result, the extra factor being due to the exhaustive
  search that performs this consistency check $2^{(1-\gamma)n}$ times.
\end{proof}
In the cases where the linear system $\mathbf{u}\cdot\mathsf{M}=\mathbf{r}$ is found inconsistent, then the polynomial system itself may be consistent and the algorithm proceeds with an exhaustive search (line~9) in a system with $\gamma n$ unknowns.
Each such search has cost $O(2^{(\gamma+\epsilon)n})$. As long as the number of these searches does not exceed $O(2^{(1-2\gamma+2F_{\alpha}(\gamma))n})$, the overall complexity of the algorithm is bounded by the complexity given in Proposition~\ref{prop:complFirstStep}. There can be two causes for the inconsistency of the linear system that triggers such a search: the existence of an actual solution with~$x_{n}=a_{n},\dots,x_{n-k+1}=a_{n-k+1}$; a witness degree of the specialized system larger than ${d}_0$ (e.g., if the homogenized specialized system is not boolean semi-regular). 
We now define a class of systems where this does not happen too much.

\begin{defn} \label{defn:strongsemireg}Let $S=(f_1,\ldots,f_m)$ be quadratic polynomials in $\mathbb{F}_2[x_1,\dots, x_n]$, $0\le k=(1-\gamma)n<n$, $\alpha= m/n$ and $d_0=1+\deg(\mathsf{HS}_{n-k,m})$. The system~$S$ is called $\gamma$-strong semi-regular if both the set of its solutions in~${\mathbb{F}_2^n}$ and the set              
\begin{multline*}
\bigl\{(a_{n-k+1},\dots, a_n)\in \mathbb F_2^k\mid \\ \dwit(f_1(x_1,\dots,x_{n-k},a_{n-k+1},\dots,a_n),\dots,f_m(x_1,\dots,x_{n-k},a_{n-k+1},\dots,a_n))> d_0\bigr\}
\end{multline*}
have cardinality at most ${2^{(1-2\gamma+2F_{\alpha}(\gamma))n}}$, with $F_{\alpha}$ as in Proposition~\ref{prop:complFirstStep}.
\end{defn}
Note that since $1-2\gamma+2F_{\alpha}(\gamma)$ is a decreasing function of~$\gamma$, a $\gamma$-strong semi-regular system is also $\gamma'$-strong semi-regular for any $\gamma'<\gamma$. 

The first condition for a system to be $\gamma$-strong semi-regular concerns its number of solutions. For boolean systems drawn uniformly at random, it is known that the probability that the number of boolean solutions is~$s$ decreases more than exponentially with~$s$ \citep{FusBac07}, so that the first condition is fulfilled with large probability. 
The second condition is related to the proportion of boolean semi-regular systems. We discuss this condition in the next section and show that it is also of large probability experimentally. Under this assumption of $\gamma$-strong semi-regularity, we now state the complexity of the algorithm obtained by optimizing the choice of the number~$k$ of variables that are specialized.

We first discuss large values of~$\gamma$. The function $1-2\gamma+2F_{\alpha}(\gamma)$ is decreasing with~$\alpha$
and negative when $\gamma=1$. Thus, the first condition implies that a $1$-strong semi-regular system has no solution. By continuity, this behavior persists for $\gamma$ close to~1 and actually holds for~$\gamma\in(0.824,1)$. It also persists for smaller values of $\gamma$ and larger $\alpha$.

\begin{cor}\label{cor:inconsistent}With the same notations as in Prop.~\ref{prop:complFirstStep}, when a system is $\gamma$-strong semi-regular with $\alpha$ and $\gamma$ such that $1-2\gamma+2F_{\alpha}(\gamma)<0$, then it is inconsistent and detected by Algorithm {\sf BooleanSolve} with parameters~$(m,n,0)$ in $O(2^{(\theta F_{\alpha}(1)+\epsilon)n})$ operations.
\end{cor}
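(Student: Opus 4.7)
My proof plan proceeds in three steps.

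\textbf{Step 1 (inconsistency).} By Definition~\ref{defn:strongsemireg}, the set of solutions of $S$ in $\mathbb{F}_2^n$ has cardinality at most $2^{(1-2\gamma+2F_{\alpha}(\gamma))n}$. The sign hypothesis makes this bound strictly less than~$1$ for $n$ sufficiently large, and since cardinalities are non-negative integers, the solution set is empty; hence $S$ is inconsistent.

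\textbf{Step 2 (detection via the linear-algebra test).} With parameters~$(m,n,0)$, the outer loop of Algorithm~\ref{algo:booleansolve} reduces to a single iteration over the empty specialization, so $(\tilde f_i)=(f_i)$ and $d_0=1+\deg(\mathsf{HS}_{n,m})$. Read at~$\gamma=1$ (i.e.\ $k=0$), the bad-specialization set of Definition~\ref{defn:strongsemireg} is a subset of the singleton $\mathbb{F}_2^0=\{()\}$, and the same integer-counting argument as in Step~1 forces it to be empty, yielding $\dwit(f_1,\dots,f_m,x_1^2-x_1,\dots,x_n^2-x_n)\le d_0$. By Definition~\ref{defn:dlin}, this places $1$ in the row span of the boolean Macaulay matrix in degree $d_0$, so the linear system $\mathbf u\cdot\mathsf M=\mathbf r$ is consistent. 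The algorithm therefore prunes at line~8, never enters the exhaustive-search branch, and returns $S=\emptyset$, certifying inconsistency.

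\textbf{Step 3 (complexity).} Since the exhaustive-search branch is skipped, the runtime is that of a single consistency test on a boolean Macaulay matrix in degree~$d_0$. Substituting $\gamma=1$ into Proposition~\ref{prop:complFirstStep} directly yields $O(2^{(1-1+\theta F_{\alpha}(1)+\epsilon)n})=O(2^{(\theta F_{\alpha}(1)+\epsilon)n})$ in the deterministic variant, matching the claim. The preprocessing of the $\tilde f_i$ is trivial here (no variables are actually specialized) and contributes only polynomially in~$n$.

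The main subtlety is reconciling the free parameter~$\gamma$ in the hypothesis with the concrete value $k=0$ used by the algorithm: the bad-specialization set of Definition~\ref{defn:strongsemireg} depends on $k=(1-\gamma)n$, so it is really the instance $\gamma=1$ that controls the witness degree of the unspecialized system. Once the sign condition is in force, however, the cardinality bound is strictly less than~$1$, so every relevant set, including the one at $\gamma=1$ needed in Step~2, must vanish; this sidesteps any monotonicity transfer between different values of~$\gamma$ and makes the argument of Step~2 automatic.
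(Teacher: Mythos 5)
Your overall route is the same as the paper's: condition (a) of Definition~\ref{defn:strongsemireg} plus the sign hypothesis gives an empty solution set, the witness degree of the unspecialized system is bounded by $d_0=1+\deg(\mathsf{HS}_{n,m})$ so the single consistency test certifies inconsistency, and Proposition~\ref{prop:complFirstStep} at $\gamma=1$ gives the cost. Steps~1 and~3 are fine. The problem is the justification you give in Step~2 and in your closing paragraph. You claim that because the cardinality bound $2^{(1-2\gamma+2F_{\alpha}(\gamma))n}$ is less than $1$, ``every relevant set, including the one at $\gamma=1$,'' must be empty. That is a non sequitur: the definition of $\gamma$-strong semi-regularity bounds only the two sets attached to the \emph{hypothesized} $\gamma$, namely the solution set and the set of bad specializations in $\mathbb{F}_2^{(1-\gamma)n}$ measured against $d_0=1+\deg(\mathsf{HS}_{\gamma n,m})$. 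The set you actually need in Step~2 is the one attached to $\gamma=1$ (is $\dwit$ of the \emph{full} system at most $1+\deg(\mathsf{HS}_{n,m})$?), and the hypothesis says nothing about it when $\gamma<1$. The paper's own monotonicity remark after Definition~\ref{defn:strongsemireg} transfers strong semi-regularity \emph{downward} ($\gamma$-strong implies $\gamma'$-strong for $\gamma'<\gamma$), which is exactly the wrong direction for your purpose; and knowing that every specialization of $(1-\gamma)n$ variables has witness degree $\le 1+\deg(\mathsf{HS}_{\gamma n,m})$ does not, by any argument you give, bound the witness degree of the unspecialized system by $1+\deg(\mathsf{HS}_{n,m})$.

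To be fair, the paper's one-line proof (``the hypothesis implies \dots its witness degree is bounded by $d_0$'') glosses over the same point; the surrounding discussion of $1$-strong semi-regular systems suggests the intended reading is that the second condition of Definition~\ref{defn:strongsemireg} be invoked at $k=0$, where the sign condition $1-2+2F_{\alpha}(1)<0$ holds automatically and the singleton ``bad-specialization'' set being empty is precisely the statement $\dwit\le d_0$. If you restate the hypothesis as $1$-strong semi-regularity (or explicitly add the witness-degree bound for the full system), your argument closes; as written, Step~2 rests on an unjustified transfer between different values of $\gamma$.
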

\begin{figure}
  \centering
\begin{tikzpicture}
\pgfplotsset{every axis legend/.append style={
    at={(1,0.95)},
anchor=north east}}
\begin{axis}[
scale only axis,
xmin=1,ymin=0, xmax=5,
axis y line*=left,
xlabel style={at={(1.05,0.05)},anchor=south}, 
ylabel style={at={(0.1,1.06)},anchor=east}, 
xlabel={\normalsize $\alpha$},
legend columns=3, 
ylabel={\normalsize exponent},
title={\normalsize Exponent in the cost of all consistency tests}]

\addplot[red, mark=square,thick] coordinates{
(1.0000, 0.7911)
(1.2500, 0.7388)
(1.5000, 0.6866)
(1.7500, 0.6344)
(2.0000, 0.5847)
(2.2500, 0.5428)
(2.5000, 0.5071)
(2.7500, 0.4763)
(3.0000, 0.4495)
(3.2500, 0.4258)
(3.5000, 0.4047)
(3.7500, 0.3858)
(4.0000, 0.3687)
(4.2500, 0.3533)
(4.5000, 0.3392)
(4.7500, 0.3263)
(5.0000, 0.3144)
};
\addplot[green, mark=triangle,thick] coordinates{
(1.0000, 0.8410)
(1.2500, 0.8012)
(1.5000, 0.7615)
(1.7500, 0.7217)
(2.0000, 0.6820)
(2.2500, 0.6422)
(2.5000, 0.6024)
(2.7500, 0.5659)
(3.0000, 0.5340)
(3.2500, 0.5058)
(3.5000, 0.4807)
(3.7500, 0.4583)
(4.0000, 0.4381)
(4.2500, 0.4197)
(4.5000, 0.4030)
(4.7500, 0.3877)
(5.0000, 0.3736)
};
\addplot[blue, mark=o,thick] coordinates{
(1.0000, 0.8876)
(1.2500, 0.8595)
(1.5000, 0.8314)
(1.7500, 0.8033)
(2.0000, 0.7752)
(2.2500, 0.7471)
(2.5000, 0.7190)
(2.7500, 0.6909)
(3.0000, 0.6628)
(3.2500, 0.6347)
(3.5000, 0.6066)
(3.7500, 0.5787)
(4.0000, 0.5531)
(4.2500, 0.5299)
(4.5000, 0.5088)
(4.7500, 0.4895)
(5.0000, 0.4717)
};
\addplot+[black,densely dotted, mark=none, ycomb] plot coordinates {(1.82,0.6198) (2.48,0.6056) (3.64,0.5908)};
\node[coordinate,pin=above:{$1.82$}] at (axis cs:1.82,0){};
\node[coordinate,pin=above:{$2.48$}] at (axis cs:2.48,0){};
\node[coordinate,pin=above:{$3.64$}] at (axis cs:3.64,0){};
\tiny\legend{$\theta=3$,$\theta=2.376$, $\theta=2$}
\end{axis}


\end{tikzpicture}
  \caption{Exponent of the complexity for inconsistent systems in terms of the ratio $\alpha$ (see Thm. \ref{thm:globalcompl} and Cor.~\ref{cor:inconsistent})}
  \label{fig:exponent}
\end{figure}
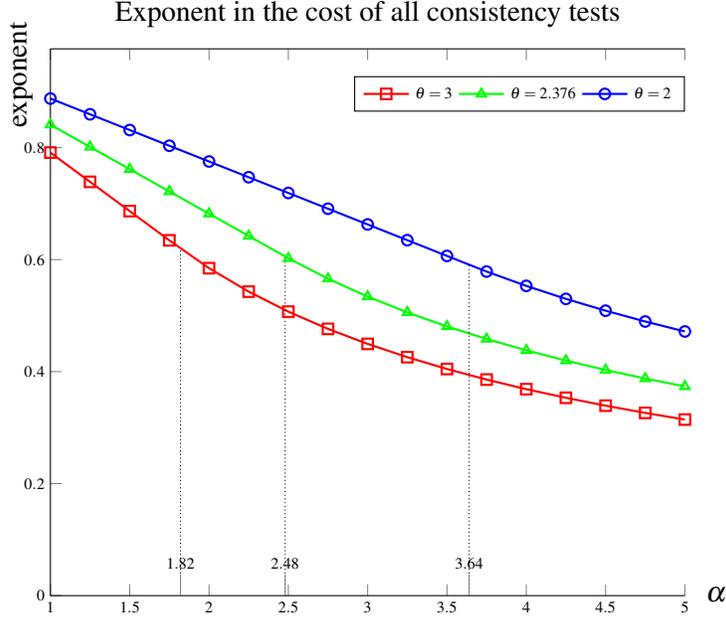
\noindent The value of the exponent
$\theta F_\alpha(1)$ in terms of $\alpha$ is plotted in
Figure~\ref{fig:exponent} (it corresponds to the right part of the plots, i.e. $\alpha>1.82$ for $\theta=2$, $\alpha>2.48$ for $\theta=2.376$, $\alpha>3.64$ for $\theta=3$).
\begin{proof}The hypothesis implies that the system has no solution and that its witness degree is bounded by~$d_0$, so that its absence of solution is detected by the linear algebra step in degree~$d_0$. In that case, no exhaustive search is needed. 
\end{proof}
 
For smaller values of~$\gamma$, the algorithm requires exhaustive searches. The optimal choice of~$k$ is obtained by an optimization on the complexity estimate. This leads to the following complexity estimates. In the next section, we argue that the required strong semi-regularities are very likely in practice, so that the only choice left to the user is that of the linear algebra routine.

\begin{thm}\label{thm:globalcompl}
Let $S=(f_1,\dots,f_m)$ be a system of quadratic polynomials in $\mathbb{F}_2[x_1,\dots, x_n]$, with $m=\lceil \alpha n\rceil$ and $\alpha\ge1$. 
Then Algorithm~\textsf{BooleanSolve} finds all its roots in $\mathbb{F}_2^n$ with a number of arithmetic operations in~$\mathbb{F}_2$ that is
\begin{itemize}
        \item $O(2^{(1-0.112\alpha) n})$ if $S$ is $(.27\alpha)$-strong semi-regular using Gaussian elimination for the linear algebra step;
        \item $O(2^{(1-0.159\alpha) n})$ if $S$ is $(.40\alpha)$-strong semi-regular using computation of the row echelon form with Coppersmith-Winograd multiplication;
        \item of expectation $O(2^{(1-0.208\alpha) n})$ if $S$ is $(.55\alpha)$-strong semi-regular using the probabilistic Algorithm~\ref{algo:testinconsistency}.
\end{itemize}
In all cases, the value of~$k$ passed to the algorithm is $\lceil n(1-\gamma)\rceil$ with $\gamma$ corresponding to the strong semi-regularity.
\end{thm}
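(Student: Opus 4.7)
The plan is to combine the cost bound on consistency tests from Proposition~\ref{prop:complFirstStep} with an upper bound on the total cost of the exhaustive searches performed in line~9, and then to optimize the parameter~$\gamma=1-k/n$ under the $\gamma$-strong semi-regularity hypothesis.

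First I would bound the total cost of the exhaustive searches. A search in line~9 is triggered only when the linear system $\mathbf{u}\cdot\mathsf{M}=\mathbf{r}$ is found to be consistent, which can happen only for specializations $(a_{n-k+1},\dots,a_n)\in\mathbb{F}_2^k$ such that either the specialized system has a solution in $\mathbb{F}_2^{n-k}$, or the witness degree of the specialized system exceeds~$d_0$. Each actual solution of~$S$ determines a unique such specialization, and by Definition~\ref{defn:strongsemireg} both sets have cardinality at most~$2^{(1-2\gamma+2F_\alpha(\gamma))n}$. Hence at most $2\cdot 2^{(1-2\gamma+2F_\alpha(\gamma))n}$ exhaustive searches are triggered. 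Each such search inspects the $2^{\gamma n}$ candidate assignments and evaluates $m$ quadratic polynomials, for a per-search cost of $O(2^{(\gamma+\epsilon)n})$. The total cost of exhaustive searches is therefore $O(2^{(1-\gamma+2F_\alpha(\gamma)+\epsilon)n})$.

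Next I would add this to the cost of all consistency tests given by Proposition~\ref{prop:complFirstStep}. Since $\theta\ge 2$ and $F_\alpha(\gamma)\ge 0$, in the deterministic setting the consistency tests dominate and the total exponent is $1-\gamma+\theta F_\alpha(\gamma)+\epsilon$; in the Las Vegas setting both contributions match and the expected total exponent is $1-\gamma+2F_\alpha(\gamma)+\epsilon$. In every case the total exponent equals $E_c(\gamma):=1-\gamma+c\,F_\alpha(\gamma)$ with $c=\max(\theta,2)\in\{3,\,2.376,\,2\}$ in the three stated variants.

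It then remains to optimize~$E_c(\gamma)$ in~$\gamma$. Using the definition of~$F_\alpha$ and $D=M(\alpha/\gamma)$, the substitution $\delta=\gamma/\alpha$ yields the homogeneity $F_\alpha(\gamma)=\alpha F_1(\delta)$, so that $E_c(\gamma)=1-\alpha\bigl(\delta-cF_1(\delta)\bigr)$. The optimizer $\delta^\star$ is therefore independent of~$\alpha$, and $\gamma^\star=\alpha\delta^\star$. One-dimensional numerical optimization of $\delta\mapsto\delta-cF_1(\delta)$ (together with the strong semi-regularity constraint fixing~$\gamma^\star$) produces the three values $\delta^\star\approx 0.27,\,0.40,\,0.55$ and the corresponding exponents $1-\alpha K$ with $K=\delta^\star-cF_1(\delta^\star)\approx 0.112,\,0.159,\,0.208$, matching the statement. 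The arbitrary small~$\epsilon$ from Proposition~\ref{prop:complFirstStep} and from the exhaustive search bound is absorbed by rounding the constants.

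The main obstacle is not conceptual but numerical: $F_1(\delta)$ is built from the algebraic function $M$ of Proposition~\ref{prop:dregh} composed with the binary entropy, so the optimum and its value must be computed with enough precision to justify the rounded constants~$0.27,\,0.40,\,0.55$ and $0.112,\,0.159,\,0.208$. One must also check that for $\alpha\ge 1$ the optimum $\delta^\star$ lies in $(0,1/\alpha]$ so that the corresponding~$k=\lceil n(1-\gamma^\star)\rceil$ is admissible; when this fails the system falls under the regime of Corollary~\ref{cor:inconsistent} and no exhaustive search is needed.
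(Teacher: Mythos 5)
Your proposal is correct and follows essentially the same route as the paper: bound the number of exhaustive searches via the two conditions in the definition of $\gamma$-strong semi-regularity, observe that their total cost $O(2^{(1-\gamma+2F_\alpha(\gamma)+\epsilon)n})$ is dominated by the consistency-test cost of Proposition~\ref{prop:complFirstStep}, and reduce the optimization to one variable via the homogeneity $F_\alpha(\gamma)=\alpha F_1(\gamma/\alpha)$ before computing the constants numerically (the paper's accompanying lemma uses exactly the substitution $\lambda=\gamma/\alpha$ with $\lambda=\min(1/\alpha,\cdot)$). One wording slip: line~9 is triggered when the linear system $\mathbf u\cdot\mathsf M=\mathbf r$ is found \emph{inconsistent} (no certificate of inconsistency of the polynomial system exists in degree $d_0$), not consistent --- the two causes you then list make clear you mean the right thing.
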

\begin{proof}The correctness of the algorithm has already been proved in Proposition~\ref{algo_correct}. Only the complexity remains to be proved.

  By definition of strong semi-regularity, the number of exhaustive
  searches that need be performed in line~9 of the Algorithm
  is~$O(2^{(1-2\gamma+2F_{\alpha}(\gamma))n})$, each of them
  using~$O(2^{(\gamma+\epsilon)n})$ arithmetic operations for
  any~$\epsilon>0$. It follows that the overall cost of these
  exhaustive searches
  is~$O(2^{(1-\gamma+2F_{\alpha}(\gamma)+\epsilon)n})$; it is
  bounded by the cost of the tests of inconsistency. We now choose
  $\gamma$ in such a way as to minimize this cost, in terms of
  $\alpha$. Direct computations lead to the following numerical
  results, that conclude the proof.
\end{proof}

\begin{lem} With the same notation as in Proposition~\ref{prop:complFirstStep}, the function $1-\gamma+\theta F_{\alpha}(\gamma)$ is bounded by
        \begin{itemize}
        \item $1 - 0.112\alpha $ when $\theta=3$ and $\gamma=0.27\alpha$;
        \item $1 - 0.159\alpha$ when $\theta=2.376$ and $\gamma=0.40\alpha$;
                \item $1 - 0.208\alpha$ when $\theta=2$ and $\gamma=0.55\alpha$.
        \end{itemize}
\end{lem}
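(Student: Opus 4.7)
The plan is to reduce the claim to three elementary numerical verifications, using the crucial observation that fixing $\gamma$ to be a constant multiple of $\alpha$ makes $F_\alpha(\gamma)$ a linear function of $\alpha$. First I would rewrite $F_\alpha(\gamma)$ using the binary entropy function $H_2(p)=-p\log_2 p-(1-p)\log_2(1-p)$, which gives $F_\alpha(\gamma)=\gamma\, H_2(D)$ with $D=M(\alpha/\gamma)$. Substituting $\gamma=c\alpha$ for a constant $c$ makes $\alpha/\gamma=1/c$ constant, so $D=M(1/c)$ is a single real number depending only on $c$. Then
\begin{equation*}
1-\gamma+\theta F_\alpha(\gamma)=1-c\alpha+\theta c\alpha\, H_2(M(1/c))=1+c\alpha\bigl(\theta H_2(M(1/c))-1\bigr).
\end{equation*}
Thus the function is exactly affine in $\alpha$ with slope $c(\theta H_2(M(1/c))-1)$, and the lemma reduces to checking that for each of the three pairs $(\theta,c)$ this slope does not exceed the claimed coefficient.

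Next I would carry out the three evaluations of $M$ using the explicit formula from Proposition~\ref{prop:dregh}. For $\theta=3$, $c=0.27$, I evaluate $M(1/0.27)=M(3.704\ldots)$; the inner square root $\sqrt{x(x+2)}$ and the outer radical $\sqrt{2x^2-10x-1+2(x+2)\sqrt{x(x+2)}}$ are computed, yielding $D\approx 0.030$. For $\theta=2.376$, $c=0.40$, the same formula gives $M(2.5)\approx 0.043$. For $\theta=2$, $c=0.55$, one obtains $M(1/0.55)\approx 0.056$. These are all safely below $1/2$, which is needed so that Proposition~\ref{prop:sizemac} can be applied as claimed in the preceding proposition.

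Then I would substitute each of these $D$-values into $H_2(D)$ and compute the resulting slope $c(\theta H_2(D)-1)$. The three computations yield approximately $0.27\cdot(3\cdot0.195-1)\approx-0.112$, $0.40\cdot(2.376\cdot0.254-1)\approx-0.158$, and $0.55\cdot(2\cdot0.311-1)\approx-0.208$, matching the three displayed bounds up to rounding in the last digit. Since the $\leq$ in the statement absorbs the small rounding errors, this settles the lemma.

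The main obstacle is essentially bookkeeping rather than conceptual: one must keep enough decimal precision in the numerical computations that the rounding of the slope to three significant figures is reliable, since the coefficients $0.112$, $0.159$, $0.208$ are stated to that accuracy. Optionally I would also remark (although it is not required by the statement) that the three values of $c$ are precisely the critical points of $c\mapsto c(1-\theta H_2(M(1/c)))$, which explains why they are the choices made in Theorem~\ref{thm:globalcompl}; verifying that criticality is a one-variable calculus problem independent of $\alpha$, thanks to the same scaling trick that made the expression affine.
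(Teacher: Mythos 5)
Your approach is the same as the paper's: the proof in the paper also reduces the two-parameter function to one parameter by the scaling $\lambda=\gamma/\alpha$ (equivalently, your observation that $F_\alpha(c\alpha)=c\alpha H_2(M(1/c))$ is linear in $\alpha$), and then concludes by ``direct computations''; your optional remark about criticality of the three values of $c$ is exactly the content of the paper's phrase ``optimal $\lambda$'s''. One caution on the numerics, which you yourself half-anticipate: for the middle case your rounded figures give a slope of $0.40\,(2.376\cdot 0.254-1)\approx -0.1586$, and since $-0.1586>-0.159$ this does \emph{not} verify the stated bound $1-0.159\alpha$ --- the inequality ``absorbs rounding'' only in one direction. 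Carrying one more digit ($M(2.5)\approx 0.04248$, $H_2\approx 0.25355$) gives a slope $\approx -0.15903\le -0.159$, and similarly $\approx -0.1125\le -0.112$ and $\approx -0.2089\le -0.208$ in the other two cases, so the lemma does hold; just make sure the final verification is done at a precision where the sign of the comparison is unambiguous.
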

\begin{proof}
The function $1-\gamma + \theta F_{\alpha}(\gamma)$ has two parameters but its extrema can be found by reducing it to a one parameter function. Indeed, this function is maximal for $\alpha\ge1$ and $\gamma\in[0,1]$ when $(-\gamma+\theta F_\alpha(\gamma))/\alpha$ is. Setting $\lambda=\gamma/\alpha$, this is exactly $-\lambda+\theta F_1(\lambda)$, with $\lambda\in[0,1/\alpha]$.
Direct computations lead to the optimal $\lambda$'s:
$\lambda=\min(1/\alpha,0.27)$ when $\theta=3$, $\lambda=\min(1/\alpha,0.40)$ when $\theta=2.376$, $\lambda=\min(1/\alpha,0.55)$ when $\theta=2$.
\end{proof}

\section{Numerical Experiments on Random Systems}
\label{sec:expmodel}

\textbf{Probabilistic model.} In this section, we study experimentally the behavior of Algorithm
\textsf{BooleanSolve} of random quadratic systems where each
coefficient is 0 or 1 with probability 1/2. These random boolean quadratic
systems appear naturally in Cryptology since the security of several
recent cryptosystems relies directly on the difficulty of solving such
systems (see e.g., \cite{QUADEuro,QUADJSC}).

\subsection{$\gamma$-strong semi-regularity}
\label{sec:numBeta}

The goal of this section is to give experimental evidence that the
assumption of $\gamma$-strong semi-regularity is not a strong condition
for random boolean systems. This is related to the notoriously difficult
conjecture by \cite{Fro85}, which states that in characteristic $0$,
almost all systems are semi-regular (with the meaning of
semi-regularity given in \citep{BarFauSal05}), see also \citep{Moreno-Socias2003}.

Consequently, we propose the following conjecture, which can be seen
as a variant of Fr\"oberg's conjecture for boolean systems:

\begin{conj}\label{conj:propbeta}For any~$\alpha\ge 1$ and $\gamma<1$ such that
  $1-2\gamma+2F_{\alpha}(\gamma)>0$, the proportion of $\gamma$-strong
  semi-regular systems of $\lceil\alpha n\rceil$ quadratic
  polynomials in~$\mathbb{F}_2[x_1,\dots,x_n]$ tends to~1
  when~$n\rightarrow\infty$.
\end{conj}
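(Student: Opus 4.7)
The plan is to split the event ``the system fails to be $\gamma$-strong semi-regular'' into the two conditions of Definition~\ref{defn:strongsemireg} and bound the probability of each by a quantity going to zero.

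First, for the bound on the number of roots in~$\mathbb{F}_2^n$, I would appeal to \citep{FusBac07}: the number of $\mathbb{F}_2$-roots of a random quadratic system concentrates around a constant mean (about $1-1/e$ when $m=n$, and decaying as $m/n$ grows) with a tail decaying super-exponentially in the number of roots. Since the assumption $1-2\gamma+2F_\alpha(\gamma)>0$ makes the threshold $2^{(1-2\gamma+2F_\alpha(\gamma))n}$ grow exponentially in~$n$, the probability of exceeding it vanishes faster than any inverse polynomial in~$n$.

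For the count of ``bad'' specializations, the key observation is that for every fixed $(a_{n-k+1},\dots,a_n)$ the specialized system is itself uniformly distributed over quadratic systems in~$n-k=\gamma n$ variables with $m$ equations, since specialization acts linearly on independent uniform coefficients. By linearity of expectation, the expected number of bad specializations equals $2^k p_n$, where $p_n$ is the probability that such a random reduced system has witness degree greater than $d_0=1+\deg(\mathsf{HS}_{\gamma n,m})$. Propositions~\ref{prop:dlindregh} and~\ref{prop:HSboolSemireg} bound this above by the probability that the reduced homogenized system fails to be boolean semi-regular. Markov's inequality applied with threshold $2^{(1-2\gamma+2F_\alpha(\gamma))n}$ then reduces the conjecture to showing that $p_n$ decays at the exponential rate $2^{-(\gamma-2F_\alpha(\gamma))n}$ (up to subexponential factors).

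The main obstacle lies precisely in this last reduction, which demands a quantitative boolean analogue of the conjecture of \citep{Fro85}; even in characteristic zero that conjecture has resisted proof for decades. A plausible attack is to proceed degree by degree: identify boolean semi-regularity in degree $d<D_0$ with a maximal-rank condition on the multiplication maps appearing in the exact sequences of the proof of Proposition~\ref{prop:HSboolSemireg}, exhibit one explicit system where every relevant map has maximal rank (for instance, a system successfully tested in Section~\ref{sec:numBeta}), and control the codimension of the failure locus by Schwartz--Zippel-type counting over~$\mathbb{F}_2$. The trivial syzygies of type~\eqref{syz-2} that are specific to the boolean setting make a direct transfer from any characteristic-zero proof unlikely to suffice, so a fully rigorous argument will probably require either a new structural idea specific to~$\mathbb{F}_2$ or a breakthrough on the underlying conjecture of \citep{Fro85} itself. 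Pending such a result, only the strong experimental evidence presented in Section~\ref{sec:numBeta} seems currently within reach.
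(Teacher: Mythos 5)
The statement you have been asked to prove is labelled a \emph{conjecture} in the paper, and the paper does not prove it: Section~\ref{sec:numBeta} offers only experimental evidence, together with the remark that the statement is a boolean analogue of the conjecture of \citep{Fro85}. Your proposal is therefore correctly calibrated in its final admission that a complete proof is currently out of reach. The reduction you set up is sound as far as it goes and is consistent with the fragments of argument the paper does supply: the super-exponential tail on the number of roots from \citep{FusBac07} handles the first condition of Definition~\ref{defn:strongsemireg} (the paper invokes the same Poisson picture in Section~\ref{sec:numBeta}; note only that $1-1/e$ is the limiting probability of having \emph{at least one} root when $m=n$, whereas the mean number of roots is $2^{n-m}$), and your observation that each fixed specialization is again uniformly distributed is exactly the one made in Section~\ref{sec:proba}, so linearity of expectation followed by Markov's inequality legitimately reduces the second condition to a tail bound on the probability $p_n$ that a uniform quadratic system in $\gamma n$ variables has witness degree exceeding $d_0$.

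The genuine gap is the one you name yourself: you would need $p_n$ to decay like $2^{-(\gamma-2F_\alpha(\gamma))n}$ (and one checks that $\gamma-2F_\alpha(\gamma)>0$ at the parameters of interest, so this really is an exponential-decay requirement), yet even the qualitative statement $p_n\to 0$ is an open boolean analogue of Fr\"oberg's conjecture. Your proposed Schwartz--Zippel attack faces a concrete obstruction over $\mathbb{F}_2$: the failure of a maximal-rank condition on the multiplication maps is cut out by minors of Macaulay-sized matrices, so the defining polynomials have degree growing exponentially in $n$, and a Schwartz--Zippel bound over a field of size $2$ is then vacuous. This is precisely why a single full-rank witness, which suffices for Zariski-density in characteristic zero, buys nothing here, and why the paper retreats to the experimental evidence of Figures~\ref{fig:strongsemireg} and~\ref{fig:HScoeff}. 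Your proposal should thus be read as a correct framing of what remains to be proved rather than as a proof; in that respect it matches the status of the statement in the paper itself.
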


The rest of this section is devoted to providing experiments supporting this conjecture.

\begin{figure}
\centerline{
\fbox{\includegraphics[width=.5\textwidth]{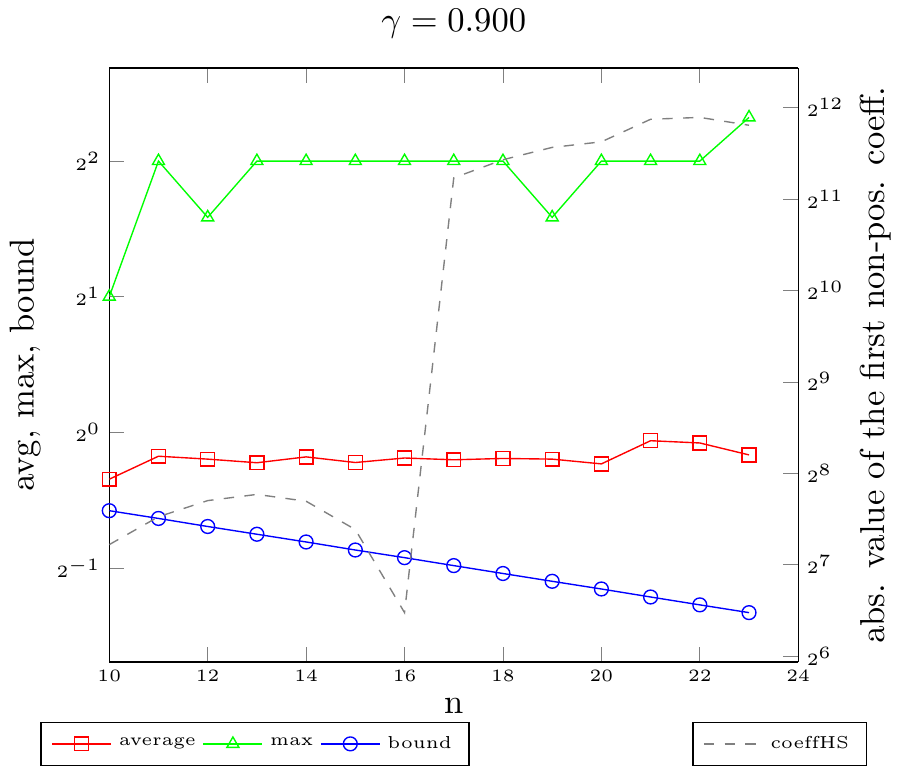}}
\fbox{\includegraphics[width=.5\textwidth]{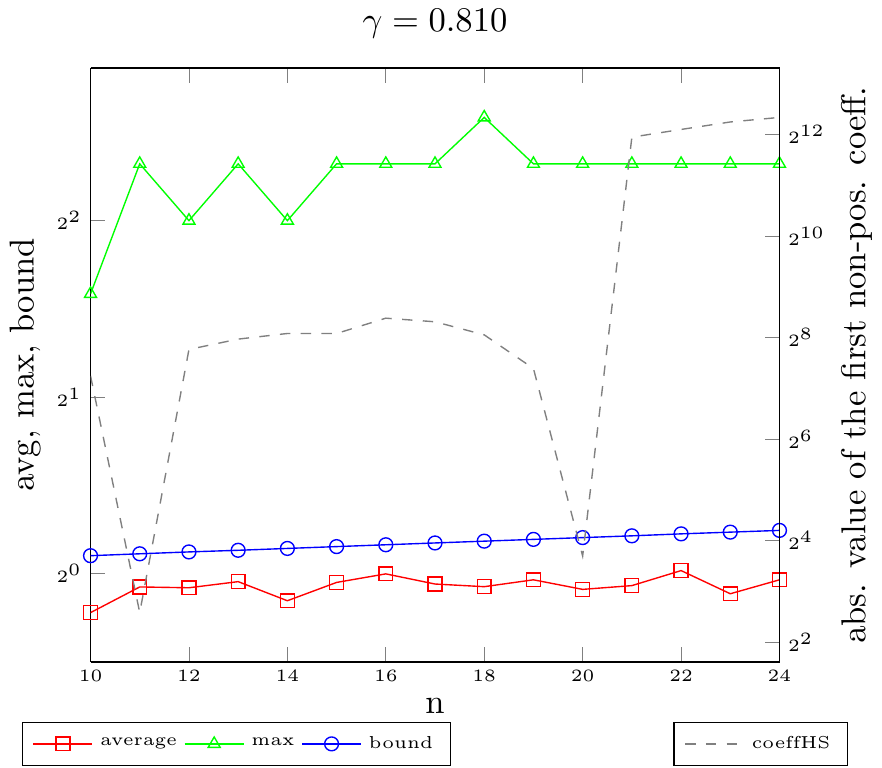}}}
\centerline{
\fbox{\includegraphics[width=.5\textwidth]{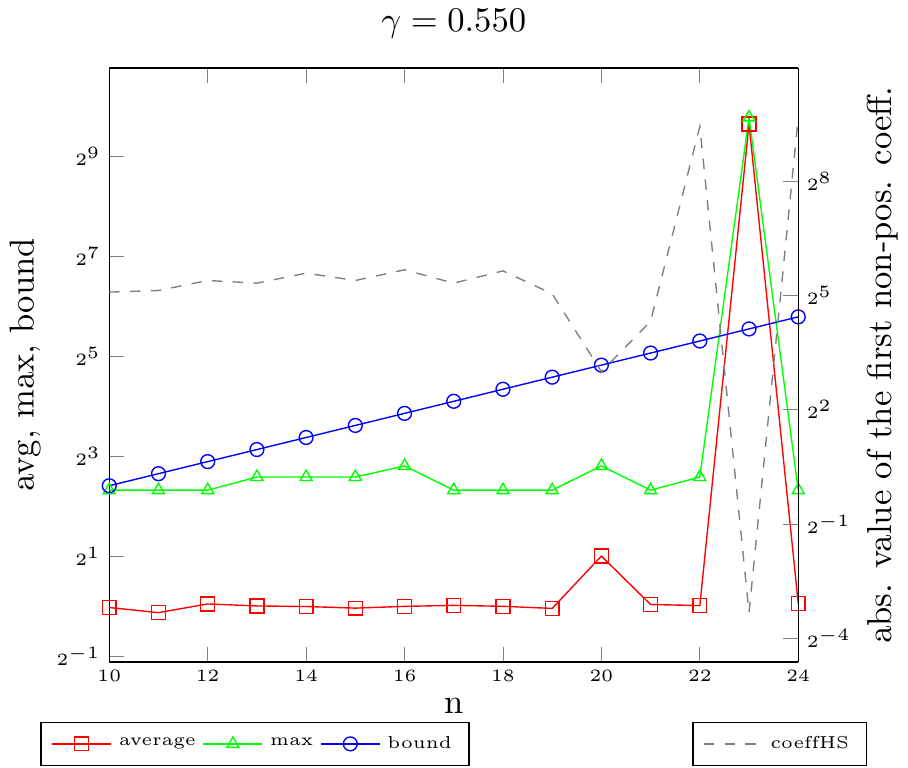}}
\fbox{\includegraphics[width=.5\textwidth]{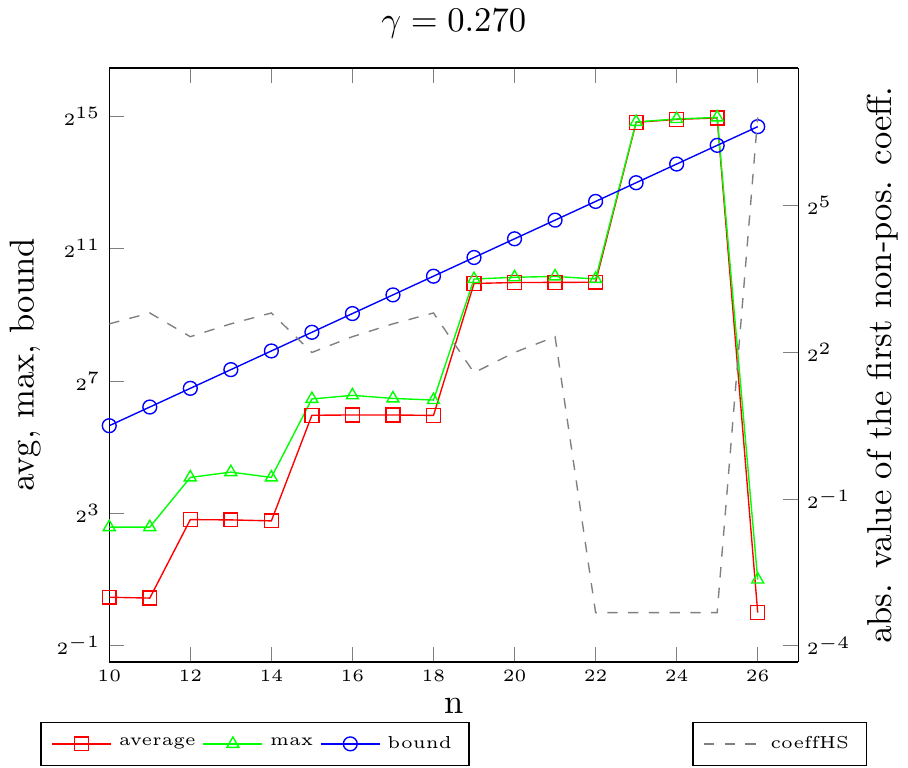}}}
\centerline{
\fbox{\includegraphics[width=.5\textwidth]{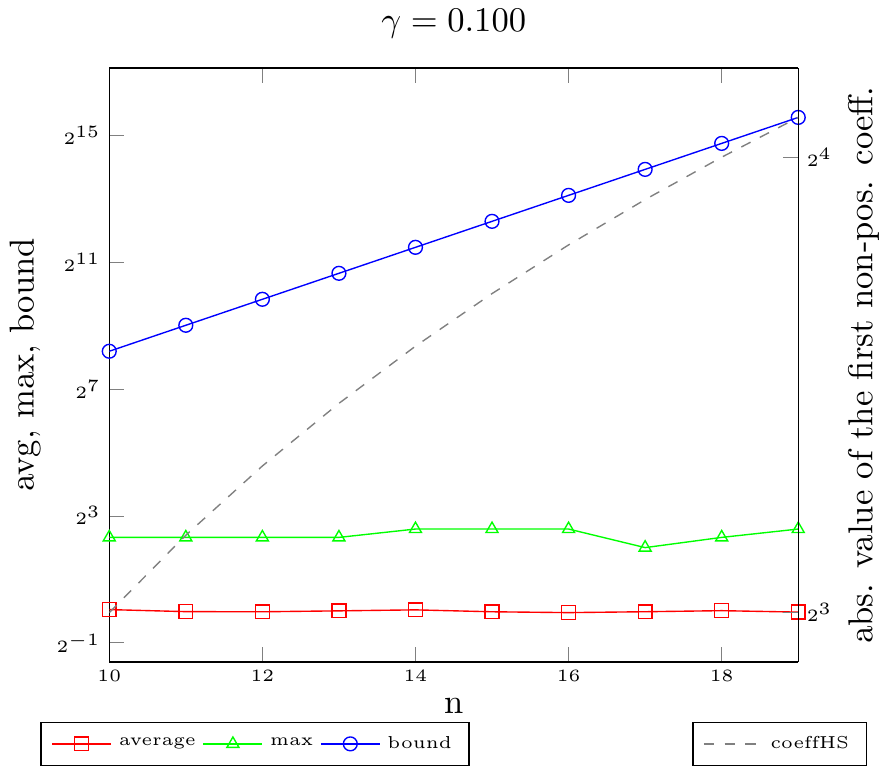}}}
\caption{Relation between the quality of the filtering, the value of
  the first nonpositive coefficient of $\mathsf{HS}_{\lfloor\gamma
    n\rfloor,n}$, and $\gamma$-strong
  semi-regularity.  In red
  (resp. green), the average (resp. maximum) number of specializations
  for which the linear system is inconsistent. In blue, the bound for
  $\gamma$-strong regularity. Dashed line: absolute value of the first
  non positive coefficient of $\mathsf{HS}_{\lfloor\gamma
    n\rfloor,n}$. \label{fig:strongsemireg}}
\end{figure}

In Figure \ref{fig:strongsemireg}, we show the relation between the
value of the first nonpositive coefficient of the power series
expansion of $\mathsf{HS}_{\lfloor \gamma n\rfloor,n}$ and
$\gamma$-strong semi-regularity for small values of $n=m$
(i.e. $\alpha=1$). For each $n$, the experiments are conducted on 1000
random quadratic boolean systems. For each of these systems, we
compute the $2^{\lceil (1-\gamma) n\rceil}$ specialized systems and we count
the number of specializations for which the filtering linear system is
inconsistent.

Four curves are represented on each chart in Figure
\ref{fig:strongsemireg}. The red (resp. green) one represents the
average (resp. maximal) number of specializations for which the linear
system (step 8 of Algorithm \textsf{BooleanSolve}) is
inconsistent. In contrast, the blue curve shows the upper bound on this
number of specializations required to be $\gamma$-strong semi-regular
(see Definition \ref{defn:strongsemireg}). The black curve shows the
absolute value of the first nonpositive coefficient of the
corresponding power series (i.e. $\mathsf{HS}_{\lfloor\gamma
  n\rfloor,n}$). The $y$-axis is represented in logarithmic scale. The
value $\gamma=0.1$ is never used in the complexity analysis (since in
Theorem \ref{thm:globalcompl}, $\gamma\geq .27$ for any value of
$\alpha\geq 1$). However, it is still interesting to study the
behavior of Algorithm \ref{algo:booleansolve} when almost all
variables are specialized: the filtering remains very efficient in
this case, and the branches which are explored during the second stage
of the exhaustive search correspond to those containing solutions of
the system. 

\medskip

\noindent \textbf{Interpretation of Figure \ref{fig:strongsemireg}.}
First, notice that for $\gamma\leq 0.55$ the green curve is always below
the blue one (except for the case $\gamma=.55, n=23$), meaning that
during our experiments, all randomly generated systems with those
parameters were $\gamma$-strong semi-regular.

Next, in most curves (except $\gamma=0.27$), the average (resp. maximal)
number of points where the specialization leads to an inconsistent
linear system is close to $1$ (resp. $5$). This can be explained by a
simple Poisson model.  Indeed, the
number of solutions of a random boolean system with as many equations
as unknowns follows a Poisson law with parameter $1$ (see~\cite{FusBac07}). Therefore, the
expectation of the number of solutions is $1$. The
expectation of the maximum of the number of solutions of $1000$ random
systems is then given as the maximum of 1000~iid random variables $P_1,\ldots, P_{1000}$ following a Poisson law of parameter $1$: 
\[\mathbf{E}(\max(P_1,\ldots, P_{1000}))=\sum_{k\ge1}{k\left((e^{-1}\sum_{i=0}^k{\frac{1}{i!}})^{1000}-(e^{-1}\sum_{i=0}^{k-1}{\frac{1}{i!}})^{1000}\right)}\simeq 5.51,\]
which explains very well the observed behaviour.

This means that during Algorithm
\ref{algo:booleansolve} with these parameters, almost all
specializations giving rise to an inconsistent system correspond to a
branch of the exhaustive search which contains an actual solution of
the system. Therefore, the filtering is very efficient for those
parameters. 

\medskip

\noindent\textbf{Few specializations.} In the case $\gamma=0.9$, the blue curve has a negative slope. This is due to
the fact that the quantity $1-2\gamma+2F_{\alpha}(\gamma)$ (see
Definition \ref{defn:strongsemireg}) is negative for $\alpha=1$ and
$\gamma>0.82308$. Therefore, we cannot expect that a large proportion
of boolean systems are $\gamma$-strong semi-regular in this setting. A
limit case is investigated in the chart corresponding to
$\gamma=0.81$. There, $1-2\gamma+2F_{\alpha}(\gamma)\approx 0.0102$ is
positive but very close to zero. Experiments show that random boolean
systems with these parameters and $10\leq n\leq 24$ are $\gamma$-strong
semi-regular with probability approximately equal to $0.75$.

\medskip

\noindent\textbf{Absolute value of the first nonpositive coefficient of
  $\mathsf{HS}_{\lfloor\gamma n\rfloor,n}$ and $\gamma$-strong
  semi-regularity.} Another interesting setting is $\gamma=.55,
n=23$. Here, no generated systems were $\gamma$-strong semi-regular
(although all generated systems for $n\neq 23$ were $\gamma$-strong
semi-regular). As explained in Section \ref{sec:proba}, this is due to
the fact that the first nonpositive coefficient of the power series
expansion of $\mathsf{HS}_{\lfloor\gamma n\rfloor,n}$ is equal to
zero. In Section~\ref{sec:BoolDelta}, we show that this phenomenon
can be avoided by a simple variant of the algorithm.

A similar phenomenon happens for $\gamma=.27$: the first nonpositive
coefficient of the power series has small absolute value. It is an
accident due to the fact that this coefficient is close to zero for
$n\leq 25$ (see Figure \ref{fig:HScoeff}). On this chart, we can see
clearly the relation between the absolute value of the first
nonpositive coefficient of $\mathsf{HS}_{\lfloor\gamma n\rfloor,n}$
and the number of specializations for which the consistency test
fails.

Indeed, experiments on $1000$ random systems with $\gamma=.27$ and
$n=26$ were conducted and in this case the average number of
specializations for which the linear system is inconsistent is
$1$.

These experiments justify the fact that the complexity analysis
conducted in Section~\ref{sec:compl} is relevant for a large class of
boolean systems. Also, it shows that the random systems for which the
filtering may not be efficient can be detected \emph{a priori} by
looking at the absolute value of the first nonpositive coefficient in
the power series. If this value is small, we show in Section
\ref{sec:BoolDelta} that the quality of the filtering can be improved at low
cost by adding redundancy.

\begin{figure}
\centerline{
\fbox{\includegraphics[width=.5\textwidth]{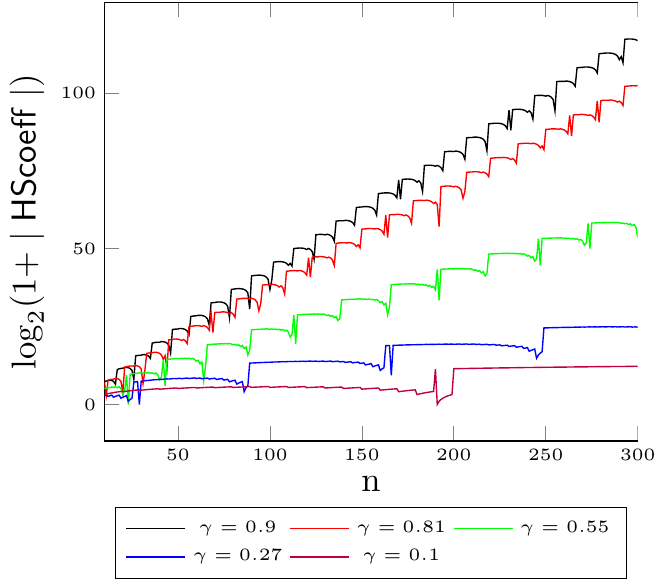}}}
  \caption{Evolution of the logarithm of the absolute value of the first nonpositive coefficient of $\mathsf{HS}_{\lfloor\gamma n\rfloor,n}$. \label{fig:HScoeff}}
\end{figure}

Figure \ref{fig:HScoeff} shows the evolution of the logarithm of the
absolute value of the first nonpositive coefficient of
$\mathsf{HS}_{\lfloor\gamma n\rfloor,n}$. This absolute value seems to
grow exponentially with $n$ for any given $\gamma$.
Since the quality of the filtering is related to this absolute value, these experiments suggest that the proportion of $\gamma$-strong semi-regular systems tends towards 1 when $n$ grows, as formulated in Conjecture \ref{conj:propbeta}.

\subsection{Numerical estimates of the complexity}
When $n=m$ and in the most favorable algorithmic case, our complexity estimate
uses~$\gamma=.55$. For this value, we display in
Figure~\ref{fig:asympt} (page \pageref{fig:asympt}) a comparison of
the behaviour of~$\deg(\mathsf{HS}_{n,\lceil{\frac{n}{\gamma}}\rceil})/n$ and
its limit. 
This picture shows a relatively slow
convergence.  Thus, for a given number~$n$ of variables it is more
interesting to optimize~$\gamma$ using the exact value
of~$\deg(\mathsf{HS}_{\lfloor{\gamma n}\rfloor,n})$ rather than a first order asymptotic estimate. In the same
spirit, one can also use the actual values given by Eq.~\eqref{eq:mac}
for the Macaulay matrix. Thus we seek to find~$\gamma$ that minimizes
the following bounds on the number of operations:
\begin{equation}\label{eq:nbops}
\begin{array}{l}
2^{(1-\gamma) n} r_{\mathsf{Mac}}c_{\mathsf{Mac}}\min(r_{\mathsf{Mac}},c_{\mathsf{Mac}})^{\theta-2}, \\\text{resp.}\quad 2^{(1-\gamma) n}\max(r_{\mathsf{Mac}},c_{\mathsf{Mac}})\log\max(r_{\mathsf{Mac}},c_{\mathsf{Mac}})s_{\mathsf{Mac}}
\end{array}\end{equation}
in the deterministic (resp. probabilistic) variants, using
Eq.~\eqref{eq:size_matrix} with~$n$ equations, $\lfloor{\gamma
  n}\rfloor$ variables and~$d=\deg(\mathsf{HS}_{\lfloor{\gamma
    n}\rfloor,n})$. The corresponding values of~$\gamma$ are given in
Figure~\ref{fig:beta}, together with the corresponding values of the
quantities in Eq.~\eqref{eq:nbops}. Although these values do not take
into account the constants hidden in the $O()$ estimates of the
complexity, they suggest the relevance of these algorithms in the
cryptographic sizes: the threshold between exhaustive search and our
algorithm with Gaussian elimination is~$n\simeq280$, while the
asymptotically faster Las Vegas variant starts being faster than
exhaustive search for~$n$ larger than~200 and beats deterministic
Gaussian elimination for $n$ larger than~160.

\begin{figure}
\centerline{\includegraphics[width=.4\textwidth]{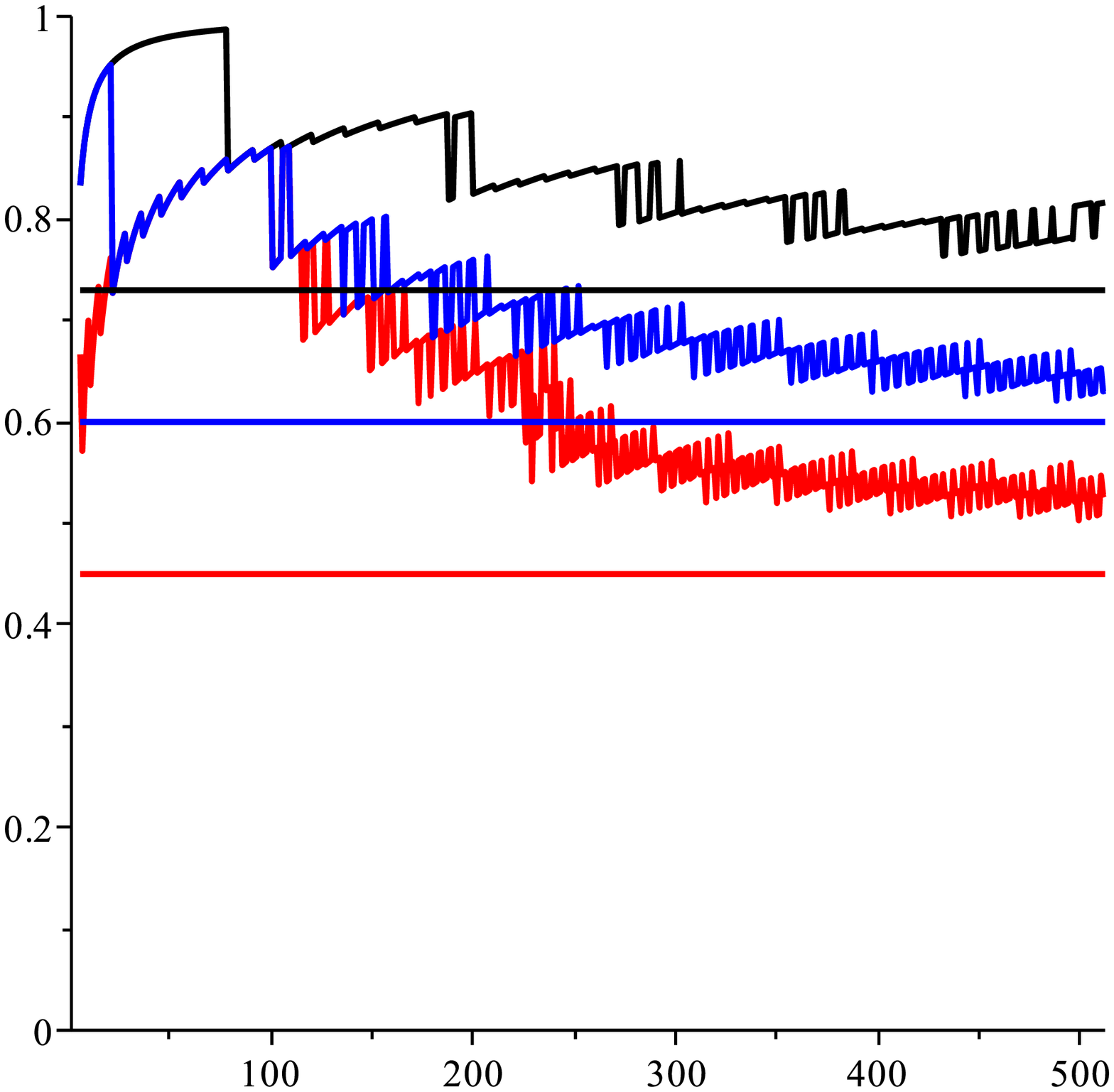}
\hfil\includegraphics[width=.55\textwidth]{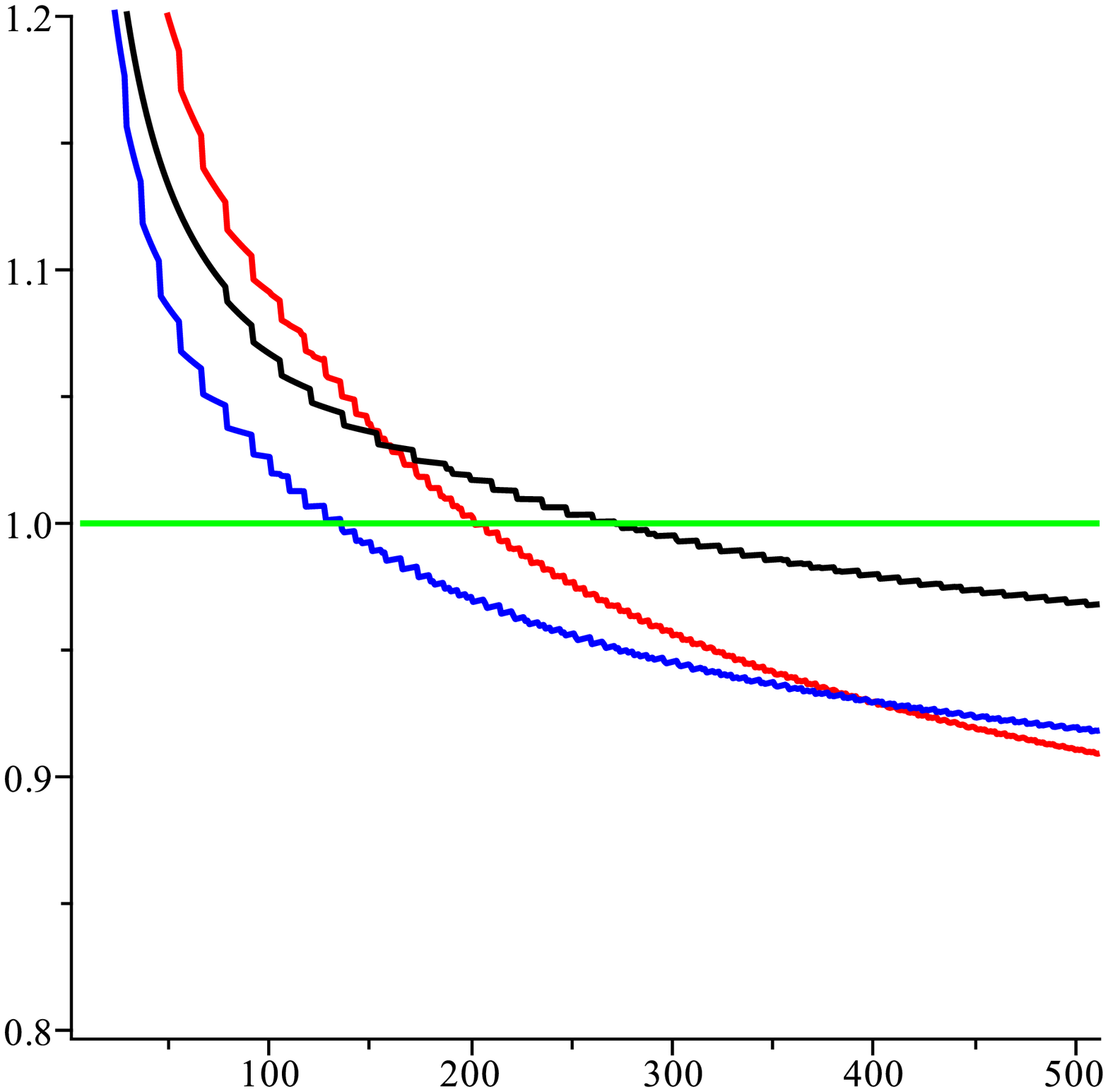}}
\caption{Left: optimal values of~$\gamma$ for the probabilistic 
variant (red), the deterministic variant with Gaussian
 elimination (black) and Coppersmith-Winograd matrix
  multiplication (blue), and their limits. Right: 
  corresponding values of $\log_2N/n$, with $N$ given 
  by~\eqref{eq:nbops}. The green line corresponds to an 
  exhaustive search.\label{fig:beta}} 
\end{figure}

\section{Extensions and Applications}
\label{sec:appli}
\subsection{Adding Redundancy to Avoid Rank Defects }
\label{sec:proba}
\begin{figure}
\centering
\begin{minipage}{.8\textwidth}
\centerline{\includegraphics[width=.7\textwidth]{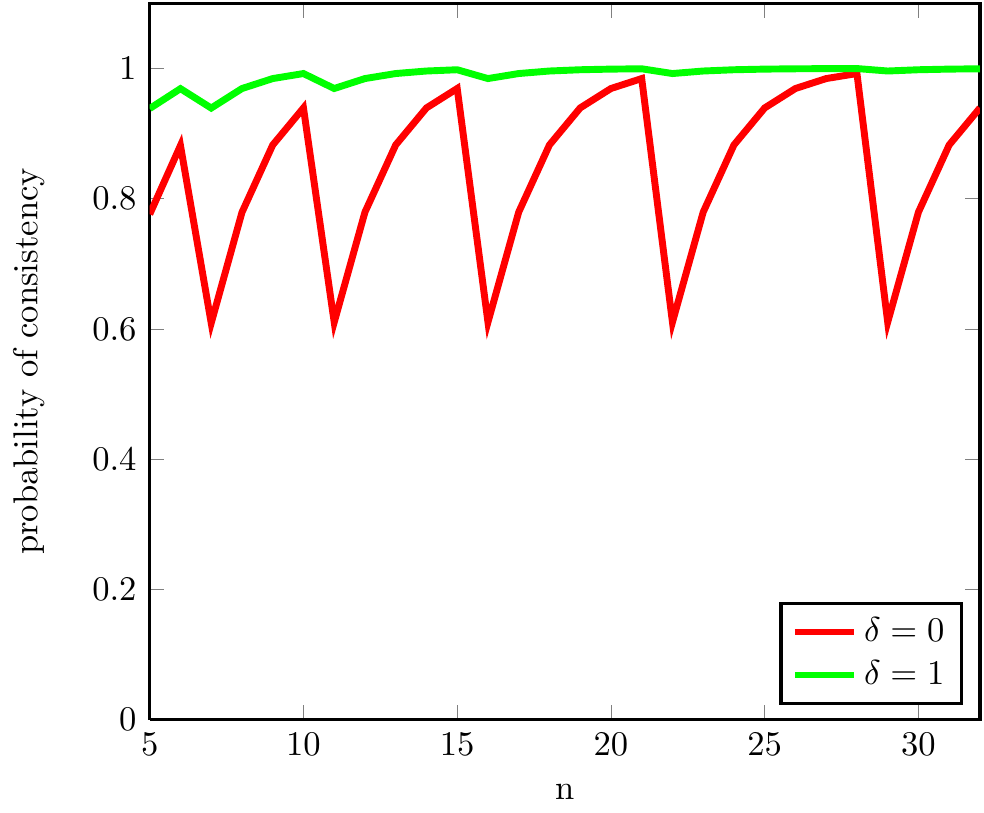}}
\caption{Proportion of specialized quadratic systems for which the
  linear system (line 9 of Algorithm \ref{algo:booleansolve}) is
  consistent. Parameters: $k=\left\lceil 1/2+n-\frac{\sqrt{-7+8
        n}}{2}\right\rceil$. In red, $\delta=0$ (corresponding to
  Algorithm \ref{algo:booleansolve}); in green, $\delta=1$ (see
  Algorithm \ref{algo:booleansolvev} of Section \ref{sec:BoolDelta}).
\label{fig:proba}}
\end{minipage}
\end{figure}

We showed in Section \ref{sec:numBeta} that when the first nonpositive
coefficient of $\mathsf{HS}_{n-k,n}$ is close to zero, then the linear
filtering may not be as efficient as expected (for instance in the
case $\gamma=.55$, $n=23$ in Figure \ref{fig:strongsemireg}). Another
case is shown in Figure \ref{fig:proba}. The curve $\delta=0$ shows
the behavior of Algorithm \ref{algo:booleansolve} on random square
systems ($m=n$) where $k$ is chosen as small as possible such that the
witness degree is $\dwit=2$: this is obtained by choosing
$k=\left\lceil 1/2+n-\frac{\sqrt{-7+8 n}}{2}\right\rceil$ (that is
$d_0=2$).

First, we observe that specializing a uniformly distributed random quadratic polynomial
$P\in\mathbb F_2[x_1,\ldots, x_n]$ at a uniformly distributed random point in $\mathbb
F_2^k$ yields a random polynomial that is also uniformly distributed 
in $\mathbb F_2[x_1,\ldots, x_{n-k}]$. We assume here that $P$ is reduced modulo the field equations $\langle
x_1^2-x_1,\ldots, x_n^2-x_n\rangle$. Let us assume first that
$k=1$. Then $P$ can be rewritten as
$$P(x_1,\ldots,x_n)=x_n P_1(x_1,\ldots, x_{n-1})+ P_2(x_1,\ldots, x_{n-1}),$$
where $P_1$ (resp. $P_2$) is a random polynomial following a uniform
distribution on the set of reduced boolean polynomials of degree $1$
(resp. of degree $2$) in $\mathbb F_2[x_1,\ldots, x_{n-1}]$.
Therefore, if $a\in \mathbb F_2$ is a random variable, $P(x_1,\ldots,
x_{n-1},a)\in \mathbb F_2[x_1,\ldots, x_{n-1}]$ is either $P_1$ or
$P_1+P_2$ and thus follows a uniform distribution on the set of
reduced quadratic boolean polynomials. The extension to arbitrary~$k<n$ follows by induction.

Consequently, in the special case $d_0=2$ of Figure \ref{fig:proba}
the boolean Macaulay matrix of a specialized system will be uniformly
distributed among the boolean matrices with the same dimensions. Also,
due to the choice of $k$, it will be roughly square. However, in
$\mathbb F_2$, the probability that a random square matrix has full
rank is not close to $1$. An estimate of this probability can be
obtained as follows.

The probability that a
random $p\times q$ boolean matrix has rank $r$ is (see \cite{FisherAlexander1966,Stitzinger87})
$$P(p,q,r)=2^{-p q}\displaystyle\frac{\prod_{j=0}^{r-1}(2^p-2^j)\prod_{j=0}^{r-1}(2^q-2^j)}{\prod_{j=0}^{r-1}(2^r-2^j)}.$$

Therefore, given a nonzero vector $\mathbf v\in \mathbb F_2^p$ and a random boolean $p\times q$ matrix $\mathsf M$, the probability that the linear system $\mathbf u\cdot \mathsf M = \mathbf v$ is consistent is
$$Q(p,q)=\sum_{i=1}^{p} P(p,q,i)\left(\frac{2^{i}-1}{2^q-1}\right).$$

Direct numerical computations show that for square matrices,
$Q(p,p)\approx 0.61$ as soon as $p\geq 4$. This probability
corresponds to the valleys of the curve $\delta=0$ in Figure~\ref{fig:proba}.  Also,
it can be noticed that $Q(p,q)$ grows quickly with $p-q$. For
instance, $Q(p+6,p)\approx 0.99$ when $p\geq 1$.
 
Consequently, it is interesting to specialize more variables than $k$ in
some cases (especially when the first nonpositive coefficient of ${(1+t)^{n-k}}/({(1-t)(1+t^2)^{m}})$ has small
absolute value): doing so increases the difference between the
dimensions of the Macaulay matrices. This does not change the
correctness of the algorithm (nor its asymptotic complexity), but
increases the effectiveness of the filtering performed by linear
algebra.

\subsection{Improving the quality of the filtering for small values
of \(n\)}
\label{sec:BoolDelta}
In this section, we propose an extension of Algorithm
\textsf{BooleanSolve} which takes an extra argument $\delta$, in order
to avoid the behavior of the algorithm shown in Section~\ref{sec:proba}. The main idea is to specialize $k+\delta$ variables,
but to take only $k$ into account for the computation of
$d_0$. Consequently, the difference between the number of columns and
the rank of the Macaulay matrix is not too small, and hence the linear
filtering performs better.
The resulting algorithm is given in Algorithm \ref{algo:booleansolvev}.

\begin{algorithm}
\caption{improved BooleanSolve.}
\label{algo:booleansolvev}
\begin{algorithmic}[1]
\Require $m, n, k, \delta\in\mathbb N$ such that $k+\delta<n\leq m$ and $f_1,\ldots,f_m$ quadratic polynomials in $\mathbb{F}_2[x_1,\ldots, x_n]$.
\Ensure The set of boolean solutions of the system $f_1=\dots=f_m=0$.
\State $S:=\emptyset$. 
\State ${d_{0}}:=$ index of the first nonpositive coefficient in the series expansion of the rational function $\frac{(1+t)^{n-k}}{(1-t)(1+t^2)^{m}}$.
\ForAll{$(a_{n-k-\delta+1},\ldots, a_n)\in\mathbb F_2^{k+\delta}$}
\For{$i$ from $1$ to $m$}
\State $\tilde{f}_i(x_1,\ldots,x_{n-k-\delta}):=f_i(x_1,\ldots, x_{n-k-\delta},a_{n-k-\delta+1},\ldots, a_n)$.
\EndFor 
\State $\mathsf{M}:=$ boolean Macaulay matrix of $(\tilde{f}_1,\ldots,\tilde{f}_m)$ in degree ${d_{0}}$.
\If{the system $\mathbf u\cdot\mathsf M=\mathbf r$ is inconsistent}\Comment{$\mathbf r$ as defined in Lemma~\ref{lem:inconsistent}}
\State $T:=$ solutions of the system ($\tilde{f}_1=\cdots=\tilde{f}_m=0$) (exhaustive search).
\ForAll{$(t_1,\ldots, t_{n-k-\delta})\in T$}
\State $S:=S\cup \{(t_1,\ldots, t_{n-k-\delta},a_{n-k-\delta+1},\ldots, a_n)\}$.
\EndFor
\EndIf
\EndFor
\State \Return $S$.
\end{algorithmic}
\end{algorithm}

In Figure \ref{fig:proba},
 we show the role of the parameter $\delta$
when $k$ is chosen minimal such that $d_0=2$: adding redundancy by
choosing a nonzero $\delta$ can greatly improve the quality of the
filtering (in practice, choosing $\delta=1$ is sufficient).

\begin{figure}[H]
\centerline{
\fbox{\includegraphics[width=.5\textwidth]{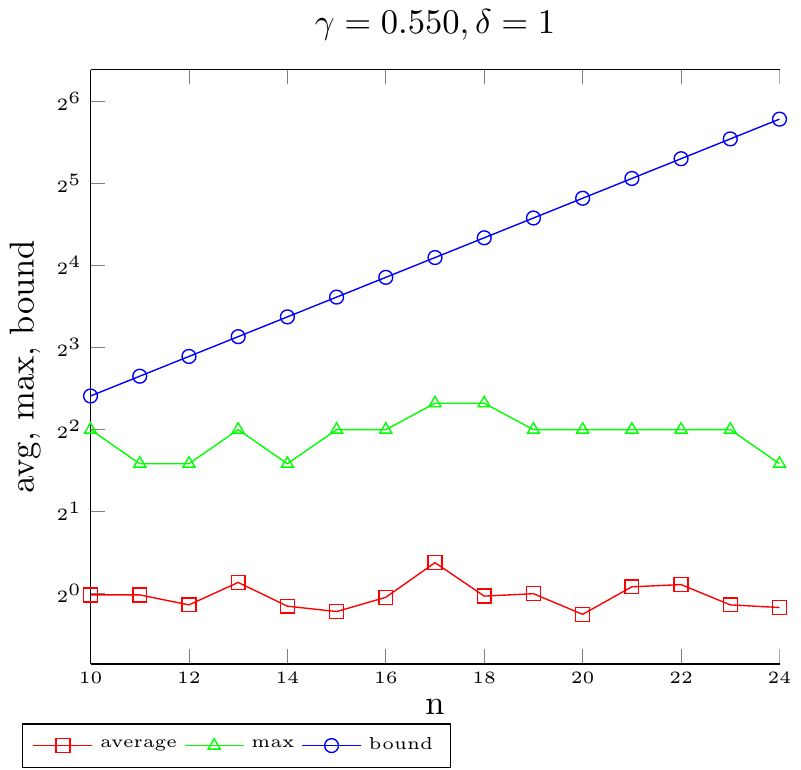}}
\fbox{\includegraphics[width=.5\textwidth]{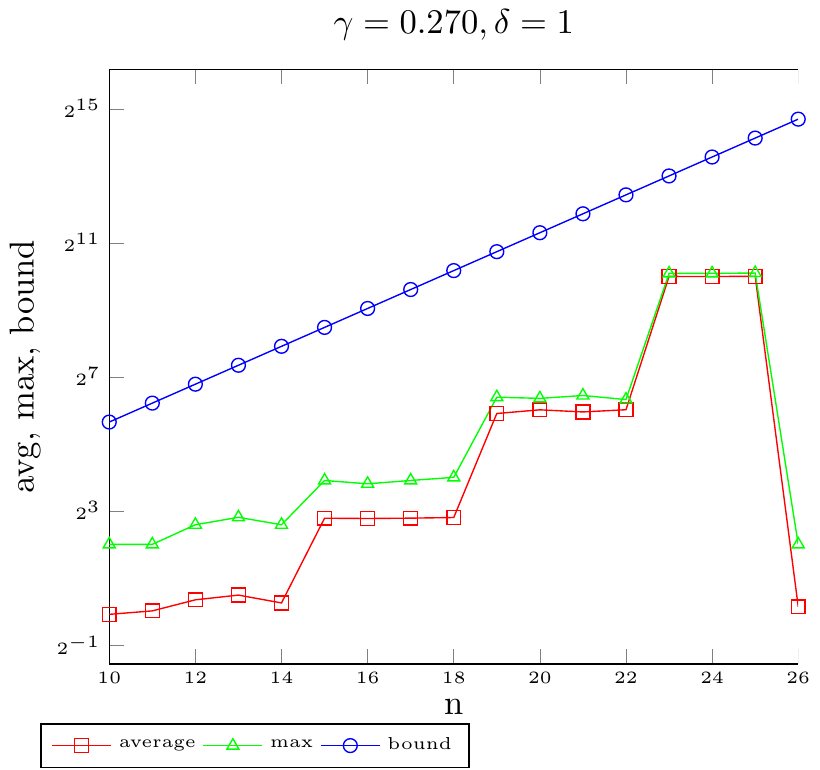}}}
\caption{Quality of the filtering with $\delta=1$.\label{fig:deltasup0}}
\end{figure}

Figure \ref{fig:deltasup0} shows further experimental evidence that adding redundancy by choosing $\delta=1$ permits to avoid problems occurring when the first nonpositive coefficient of $\mathsf{HS}_{n-k,m}$ is close to zero. For instance, the peak at $\gamma=.55$, $n=23$ that appeared in Figure \ref{fig:strongsemireg} disappears when $\delta=1$.

\subsection{Cases with Low Degree of Regularity}
In some cases, when the boolean system is \emph{not random}, the
choice of $d_0$ proposed in Algorithm \textsf{BooleanSolve} may be
too large. This happens for instance for systems that have inner
structure, which has an impact on the algebraic structure of the ideal
generated by the polynomials. Examples of such structure can be found
in Cryptology, for instance with boolean systems coming for the HFE
cryptosystem \citep{Pat96}, as shown in~\citep{FauJou03}.

For these systems, the choice of $d_0$ as the index of the first
non-positive coefficient of $\mathsf{HS}_{n,m}$ would be very
pessimistic, since the Macaulay matrices in degree $d_0$ would be
larger than necessary. However, if estimates of the witness degree are
known (this is the case for HFE), then $d_0$ can be chosen accordingly
as a parameter of the Algorithm \textsf{BooleanSolve}.

\subsection{Application in Cryptology}

Careful implementation of the algorithm will be necessary to
estimate accurately the efficiency of the {\sf BooleanSolve} algorithm.
For instance a crucial operation is the Wiedemann (or block Wiedemann)
algorithm; in practice, it is probably useless to work in a field
extension \(\mathbb{F}_{2^{k}}\) as requested by 
Proposition~\ref{thm:wiedemann}.
Working directly over \(\mathbb{F}_{2}\) and packing several
elements (bits) into words may have a dramatic effect on the
constant hidden in Theorem~\ref{thm:globalcompl}. In the following
we estimate the impact of the new algorithm from the point of
view of a user in Cryptology. In other words, if the security
of a cryptosystem relies on the hardness of solving a polynomial
system, by how much must the parameters be increased to keep the
same level of security?

The stream cipher QUAD~\citep{QUADEuro,QUADJSC} enjoys a provable security argument to support  its conjectured strength.  It relies on the iteration of a set of overdetermined multivariate quadratic polynomials over  \(\mathbb{F}_{2}\)
so that the security of the keystream generation is related, in the concrete security model, to the difficulty  of solving the Boolean MQ SAT\ problem. A theoretical bound is used in~\citep{QUADJSC}
to obtain secure parameters for a given security
bound \(T\) and a given maximal length $L$ of the keystream sequence that can be generated with a pair (key, IV): for instance (see~\cite{QUADJSC}
p. 1711), for \(T=2^{80}, L=2^{40}, k=2\) and an advantage of more than \(\varepsilon=1/100\), the bound gives
\(n\geq 331\). We report in the following table various values
of \(n\) depending on \(L\), \(T\) and \(\varepsilon\):

    \begin{center}
    \begin{tabular}{|c|c|c|c|}\hline
T & L & \(\varepsilon\) & n \\\hline
$2^{80}$ &   $2^{40}$ &   1/100    &       331\\\hline
$2^{80}$ &   $2^{22}$ &   1/100    &       253\\\hline
$2^{160}$ &  $2^{80}$ &   1/100    &       613\\\hline
$2^{160}$ &  $2^{40}$ &   1/100    &       445\\\hline
$2^{160}$ &  $2^{40}$ &   1/1000   &       448\\\hline
$2^{160}$ &  $2^{40}$ &   1/10000  &       467\\\hline
$2^{256}$ &  $2^{40}$ &   1/100    &       584\\\hline
$2^{256}$ &  $2^{80}$ &   1/100    &       758\\\hline
\end{tabular}\\
\vspace*{2mm}
Security parameters for the stream cipher QUAD~\citep{QUADJSC}
\end{center}

Now, the question is to achieve a security bound for \(T=2^{256}\); what are the minimal values
of \(m\) and \(n\) ensuring that solving the Boolean MQ SAT
requires at least \(T\) bit-operations? Using the complexity
analysis of the BooleanSolve algorithm we can derive useful
 lower bounds for \(n\) when \(m=n\) or \(m=2\,n\)
(\(m=2\,n\) corresponds to the recommended parameters for QUAD). In
the following table we report the corresponding values: 
 \begin{center}
 \begin{tabular}{|c|c|c|c|c|}\hline
Security Bound \(T\) & \rule{0cm}{5mm}$2^{128}$ & $2^{256}$ & $2^{512}$ & $2^{1024}$ \\\hline
Minimal value of \(n\) when \(m=n\)& \(128\) & \(270\) & \(576\) & \text{1202} \\\hline
Minimal value of \(n\) when \(m=2\,n\)& \(145\) & \(335\) & \(738\) & 1580 \\\hline
\end{tabular}
\end{center}

Comparing with exhaustive search we can see from this table that:
\begin{itemize}
\item our algorithm does not improve upon exhaustive search when \(n\)
  is small (for instance when \(m=n\) and \(T=2^{128}\) that are the
  recommended parameters);
\item by contrast, our algorithm can take advantage of the
  overdeterminedness of the algebraic systems: this explains why the
  values we recommend are larger than expected when \(n\) is large
  and/or \(m/n>1\).
\end{itemize}

\subsection*{Acknowledgments}
We wish to thank D.~Bernstein, C.~Diem, E.~Kaltofen and L.~Perret for
valuable comments and pointers to important references.  This work was
supported in part by the Microsoft Research-INRIA Joint Centre and by
the CAC grant (ANR-09-JCJCJ-0064-01) and the HPAC grant of the French
National Research Agency.

\bibliographystyle{abbrv}
\bibliography{biblio}
\end{document}